\providecommand{\U}[1]{\protect\rule{.1in}{.1in}}
\newtheorem{theorem}{Theorem}
\newenvironment{proof}[1][Proof]{\noindent\textbf{#1.} }{\ \rule{0.5em}{0.5em}}
\begin{document}

\doublespacing

\begin{center} \textit{Original Article} \end{center}

{\LARGE Auto-G-Computation of Causal Effects on a Network }

{\Large Eric J. Tchetgen Tchetgen}$^1${\Large , Isabel Fulcher}$^2$
{\Large and Ilya Shpitser}$^3$ 

\begin{center}
$^1${\large Wharton Statistics Department, University of Pennsylvania} \\
$^2${\large Department of Biostatistics, Harvard University} \\
$^3${\large Department of Computer Science, Johns Hopkins University}
\end{center}

\begin{abstract}
	\noindent Methods for inferring average causal effects have traditionally relied on two key assumptions: (i) the intervention received by one unit cannot causally influence the outcome of another; and (ii) units can be organized into non-overlapping groups such that outcomes of units in separate groups are independent. In this paper, we develop new statistical methods for causal inference based on a single realization of a network of connected units for which neither assumption (i) nor (ii) holds. The proposed approach allows both for arbitrary forms of interference, whereby the outcome of a unit may depend on interventions received by other units with whom a network path through connected units exists; and long range dependence, whereby outcomes for any two units likewise connected by a path in the network may be dependent. Under network versions of consistency and no unobserved confounding, inference is made tractable by an assumption that the network’s outcome, treatment and covariate vectors are a single realization of a certain chain graph model. This assumption allows inferences about various network causal effects via the \emph{auto-g-computation algorithm}, a network generalization of Robins' well-known g-computation algorithm previously described for causal inference under assumptions (i) and (ii).
\end{abstract}

\noindent\textbf{Key words:} Network, Interference, Direct Effect, Indirect Effect, Spillover effect.

\bigskip\pagebreak

\section*{\centering 1. INTRODUCTION}

Statistical methods for inferring average causal effects in a population of
units have traditionally assumed (i) that the outcome of one unit cannot be
influenced by an intervention received by another, also known as the
no-interference assumption \citep{cox1958planning,rubin1974estimating}; and (ii) that units can
be organized into non-overlapping groups, blocks or clusters such that
outcomes of units in separate groups are independent and the number of groups
grows with sample size. Only fairly recently has causal inference literature
formally considered settings where assumption (i) does not necessarily hold 
\citep{sobel2006randomized,rosenbaum2007interference,hudgens2008toward,hong2006evaluating,graham2008identifying,manski2013identification,tchetgen2012causal}.

Early work on relaxing assumption (i) considered blocks of non-overlapping
units, where assumptions (i) and (ii) held across blocks, but not necessarily
within blocks. This setting is known as \emph{partial interference} \citep{sobel2006randomized,hong2006evaluating,hudgens2008toward,tchetgen2012causal,liu2014large,lundin2014estimation,ferracci2014evidence}.

More recent literature has sought to further relax the assumption of partial
interference by allowing the pattern of interference to be somewhat arbitrary
\citep{verbitsky2012causal,aronow2017estimating,liu2016inverse,sofrygin2017semi}, 
while still restricting a unit's set
of interfering units to be a small set defined by spacial proximity or network
ties, as well as severely limiting the degree of outcome dependence in order
to facilitate inference. A separate strand of work has primarily focused on
detection of specific forms of spillover effects in the context of an
experimental design in which the intervention assignment process is known to
the analyst \citep{aronow2012general,bowers2013reasoning,athey2018exact}. In much of
this work, outcome dependence across units can be left fairly arbitrary,
therefore relaxing (ii), without compromising validity of randomization tests
for spillover effects. Similar methods for non-experimental data, such as
observational studies, are not currently available.

Another area of research which has recently received increased interest in the
interference literature concerns the task of effect decomposition of the
spillover effect of an intervention on an outcome known to spread over a given
network into so-called contagion and infectiousness components \citep{vanderweele2012components}. 
The first quantifies the extent to which an intervention
received by one person may prevent another person's outcome from occurring
because the intervention prevents the first from experiencing the outcome and
thus somehow from transmitting it to another \citep{vanderweele2012components,ogburn2014causal,shpitser2017modeling}. 
The second quantifies the extent to which even if a person experiences the outcome, 
the intervention may impair his or her ability to transmit the outcome to another. A prominent example of
such queries corresponds to vaccine studies for an infectious disease
\citep{vanderweele2012components,ogburn2014causal,shpitser2017modeling}.
\ In this latter strand of work, it is typically assumed that interference and
outcome dependence occur only within non-overlapping groups, and that the number
of independent groups is large.

We refer the reader to \cite{tchetgen2012causal}, \cite{vanderweele2014interference}, and \cite{halloran2016dependent} for extensive overviews of the fast growing literature on interference and spillover effects.

An important gap remains in the current literature: no general approach exists
which can be used to facilitate the evaluation of spillover effects on a
single network in settings where treatment outcome relationships are
confounded, unit interference may be due not only to immediate network ties
but also from indirect connections (friend of a friend, and so on) in a
network, and non-trivial dependence between outcomes may exist for units
connected via long range indirect relationships in a network.

The current paper aims to fill this important gap in the literature.
Specifically, in this paper, the outcome experienced by a given unit could in
principle be influenced by an intervention received by a unit with whom no
direct network tie exists, provided there is a path of connected units linking
the two. Furthermore, the approach developed in this paper respects a
fundamental feature of outcomes measured on a network, by allowing for an
association of outcomes for any two units connected by a path on the network.
Although network causal effects are shown to in principle be nonparametrically
identified by a network version of the g-formula \citep{robins1986new} under standard
assumptions of consistency and no unmeasured confounding adapted to the
network setting, statistical inference is however intractable given the single
realization of data observed on the network and lack of partial interference
assumption. Nonetheless, progress is made by an assumption that network data
admit a representation as a graphical model corresponding to \emph{chain
	graphs} \citep{lauritzen2002chain}. This graphical representation of
network data generalizes that introduced in \cite{shpitser2017modeling} for the
purpose of interrogating causal effects under partial interference and it is
particularly fruitful in the setting of a single network as it implies, under
fairly mild positivity conditions, that the outcomes observed on the network
may be viewed as a single realization of a certain conditional Markov random
field (MRF); and that the set of confounders likewise constitute a single
realization of an MRF. By leveraging the local Markov property associated with
the resulting chain graph which we encode in non-lattice versions of Besag's
auto-models \citep{besag1974spatial}, we develop a certain Gibbs sampling algorithm which
we call the \emph{auto-g-computation algorithm} as a general approach to
evaluate network effects such as direct and spillover effects. \ Furthermore,
we describe corresponding statistical techniques to draw inference which
appropriately account for interference and complex outcome dependence across
the network. Auto-g-computation may be viewed as a network generalization of
Robins' well-known g-computation algorithm previously described for causal
inference under no-interference and i.i.d data \citep{robins1986new}. We also note
that while MRFs have a longstanding history as models for network data
starting with \cite{besag1974spatial} (see also \cite{kolaczyk2014statistical} for a textbook treatment
and summary of this literature), a general chain graph representation of
network data appears not to have previously been used in the context of
interference and this paper appears to be the first instance of their use in
conjunction with g-computation in a formal counterfactual framework for
inferring causal effects from observational network data. 

\cite{ogburn2017causal} have recently proposed in parallel to this work, an alternative approach for
evaluating causal effects on a single realization of a network, which is based
on traditional causal directed acyclic graphs (DAG) and their algebraic
representation as causal structural equation models. \ As discussed in
\cite{lauritzen2002chain}, such alternative representation as a DAG will
generally be incompatible with our chain graph representation and therefore
the respective contribution of these two manuscripts present little to no overlap. 
Specifically, similar to our setting,  \cite{ogburn2017causal} allow for a
single realization of the network which is fully observed; however, they
assume (i) an underlying nonparametric structural equation model with
independent error terms \citep{pearl2000causality} compatible with a certain DAG
generated the network data. This assumption implies a large number of
cross-world counterfactual independences which are largely unnecessary for
identification but inherent to their model \citep{richardson2013single}.
Furthermore, (ii) their approach precludes any dependence between outcomes
not directly connected on the network nor does it allow for interference
between units which are not network ties. Finally, (iii) inferences are
primarily based on an assumption that outcome errors for the network are
conditionally independent given baseline characteristics. Our proposed approach do not
require any of assumptions (i)-(iii).

The remainder of this paper is organized as followed. In Section 2 we present
notation used throughout. In Section 3 we review notions of direct and
spillover effects which arise in the presence of interference. In this same
section, we review sufficient conditions for identification of network causal
effects by a network version of the g-formula, assuming the knowledge of the
observed data distribution, or (alternatively) infinitely many realizations
from this distribution. We then argue that the network g-formula cannot be
empirically identified nonparametrically in more realistic settings where a
\emph{single} realization of the network is observed. To remedy this
difficulty, we leverage information encoding network ties (which we assume is
both available and accurate) to obtain a chain graph representation of
observed variables for units of the network. This chain graph is then shown to
induce conditional independences which allow versions of coding and pseudo
maximum likelihood estimators due to \cite{besag1974spatial}  to be used to make
inferences about the parameters of the joint distribution of the observed data
sample. These estimators are described in Section 4, for parametric
auto-models of \cite{besag1974spatial}. The resulting parametrization is then used to
make inferences about network causal effects via a specialized Gibbs sampling
algorithm we have called the auto-g-computation algorithm, also described in
Section 4. In Section 5, we describe results from a simulation study
evaluating the performance of the proposed approach. Finally, in Section 6, we
offer some concluding remarks and directions for future research.

\section*{\centering 2. NOTATION AND DEFINITIONS}

\subsection*{2.1 Preliminaries}

\noindent Suppose one has observed data on a population of $N$ interconnected
units. Specifically, for each $i\in\{1,\ldots N\}$ one has observed
$(A_{i},Y_{i})$, where $A_{i}$ denotes the binary treatment or intervention
received by unit $i$, and $Y_{i}$ is the corresponding outcome. Let
$\mathbf{A}\equiv(A_{1},\ldots,A_{N})$ denote the vector of
treatments all individuals received, which takes values in the set
$\{0,1\}^{N},$ and $\mathbf{A}_{-j}\equiv(A_{1},\ldots A_{N})\backslash A_{j}\equiv(A_{1},\ldots,A_{j-1},A_{j+1},\ldots A_{N})$
denote the $N-1$ subvector of $\mathbf{A}$ with the $jth$ entry deleted.
In general, for any vector $\mathbf{X=}\left(  X_{i},...,X_{N}\right)  ,$ $\mathbf{X}_{-j}=(X_{1},...,X_{N})\backslash
X_{j}=(X_{1},...,X_{j-1},X_{j+1},...,X_{N}).$ Likewise if
$X_{i}=(X_{1,i},...,X_{p,i})$ is a vector with $p$ components, $X_{\backslash
	s,i}=(X_{1,i},...,X_{s-1,i},X_{s+1,i},...,X_{p,i}).$ Following \cite{sobel2006randomized}
and \cite{hudgens2008toward}, we refer to $\mathbf{A}$ as an intervention,
treatment or allocation program, to distinguish it from the individual
treatment $A_{i}.\ $Furthermore, for $n=1,2,\ldots,$ we define $\mathcal{A(}%
n)$ as the set of vectors of possible treatment allocations of length $n$; for
instance $\mathcal{A(}2)\equiv\left\{  (0,0),(0,1),(1,0),(1,1)\right\}  .$
Therefore, $\mathbf{A}$ takes one of $2^{N}$ possible values in $\mathcal{A(}%
N)$, while $\mathbf{A}_{-j}$ takes values in $\mathcal{A(}N-1)$ for all $j$. \ 

As standard in causal inference, we assume the existence of counterfactual
(potential outcome) data $\mathbf{Y}(\mathbf{\cdot})=\{Y_{i}(\mathbf{a}%
):\mathbf{a}\in\mathcal{A(}N)\mathcal{\}}$ where $\mathbf{Y}(\mathbf{a}%
)=\{Y_{1}\left(  \mathbf{a}\right)  ,\ldots,Y_{N}(\mathbf{a})\}$,
$Y_{i}\left(  \mathbf{a}\right)  $ is unit $i^{\prime}s$ response under
treatment allocation $\mathbf{a}$; and that the observed outcome $Y_{i}$ for
unit $i$ is equal to his counterfactual outcome $Y_{i}\left(  \mathbf{A}%
\right)  $ under the realized treatment allocation $\mathbf{A;}$ more
formally, we assume the network version of the consistency assumption in
causal inference:
\begin{equation}
\mathbf{Y}\left(  \mathbf{A}\right)  =\mathbf{Y}\text{ a.e.} \label{NetCons}%
\end{equation}
Notation for the random variable $Y_{i}(\mathbf{a})$ makes explicit the
possibility of the potential outcome for unit $i$ depending on treatment
values of other units, that is the possibility of interference. The standard
no-interference assumption \citep{cox1958planning,rubin1974estimating} made in the causal
inference literature, namely that for all $j$ if $\mathbf{a}$ and
$\mathbf{a}^{\prime}$ are such that $a_{j}=a_{j}^{\prime}$ then $Y_{j}\left(
\mathbf{a}\right)  =Y_{j}\left(  \mathbf{a}^{\prime}\right)  $ a.e., implies
that the counterfactual outcomes for individual $j$ can be written in a
simplified form as $\left\{  Y_{j}\left(  a\right)  :a\in\{0,1\}\right\}  $.
The partial interference assumption \citep{sobel2006randomized,hudgens2008toward,tchetgen2012causal}, which weakens the no-interference
assumption, assumes that the $N$ units can be partitioned into $K$ blocks of
units, such that interference may occur within a block but not between blocks.
Under partial interference, $Y_{i}\left(  \mathbf{a}\right)  =Y_{i}\left(
\mathbf{a}^{\prime}\right)  $ a.s. only if $a_{j}=a_{j}^{\prime}$ for all $j$
in the same block as unit $i.$ The assumption of partial interference is
particularly appropriate when the observed blocks are well separated by space
or time such as in certain group randomized studies in the social sciences, or
community-randomized vaccine trials. \cite{aronow2017estimating} relaxed the
requirement of non-overlapping blocks, and allowed for more complex patterns
of interference across the network. Obtaining identification required a priori
knowledge of the \textquotedblleft interference set,\textquotedblright\ that
is for each unit $i$, the knowledge of the set of units $\left\{
j:Y_{i}\left(  \mathbf{a}\right) \neq Y_{i}\left(  \mathbf{a}^{\prime}\right)
\text{ a.s. if }a_{k}=a_{k}^{\prime}\text{ and }a_{j}\not =a_{j}^{\prime
}\text{ for all }k\neq j\right\}  $. In addition, the number of units
interfering with any given unit had to be negligible relative to the size of
the network. See \cite{liu2016inverse} for closely related assumptions.

In contrast to existing approaches, our approach allows \emph{full} rather
than partial interference in settings where treatments are also not
necessarily randomly assigned. The assumptions that we make can be separated
into two parts: network versions of standard causal inference assumptions,
given below, and independence restrictions placed on the observed data
distribution which can be described by a graphical model, described in more
detail later.

We assume that for each $\mathbf{a}\in\mathcal{A(}N)$ the vector of potential
outcomes $\mathbf{Y}(\mathbf{a})$ is a single realization of a random field.
In addition to treatment and outcome data, we suppose that one has also
observed a realization of a (multivariate) random field $\mathbf{L}=\left(
L_{1},\ldots,L_{N}\right)  ,$ where $L_{i}$ denotes pre-treatment
covariates for unit $i$. For identification purposes, we take advantage of a
network version of the conditional ignorability assumption about treatment
allocation which is analogous to the standard assumption often made in causal
inference settings; specifically, we assume that:
\begin{equation}
\mathbf{A}\perp\!\!\!\perp\mathbf{Y}(\mathbf{a})|\mathbf{L}\text{ for all
}\mathbf{a}\in\mathcal{A(}N),\text{ } \label{NetIgn}%
\end{equation}
This assumption basically states that all relevant information used in
generating the treatment allocation whether by a researcher in an experiment
or by "nature" in an observational setting, is contained in $\mathbf{L.}$
Network ignorability can be enforced in an experimental design where treatment
allocation is under the researcher's control. On the other hand, the
assumption cannot be ensured to hold in an observational study since treatment
allocation is no longer under experimental control, in which case credibility
of the assumption depends crucially on subject matter grounds. Equation
$\left(  \ref{NetIgn}\right)  $ simplifies to the standard assumption of no
unmeasured confounding in the case of no interference and i.i.d. unit data, in
which case $A_{i}\perp\!\!\!\perp Y_{i}\left(  a\right)  |L_{i}$ for all
$a\in\left\{  0,1\right\}  $. $\ $Finally, we make the following positivity
assumption at the network treatment allocation level:%
\begin{equation}
f\left(  \mathbf{a}|\mathbf{L}\right)  >\sigma>0\text{ a.e. for all
}\mathbf{a}\in\mathcal{A(}N). \label{positivity}%
\end{equation}
\bigskip

\subsection*{2.2 Network causal effects}

We will consider a variety of network causal effects that are expressed in
terms of unit potential outcome expectations $\psi_{i}\left(  \mathbf{a}%
\right)  =E\left(  Y_{i}\left(  \mathbf{a}\right)  \right)  ,$ $i=1,...,N.$
Let $\psi_{i}\left(  \mathbf{a}_{-i},a_{i}\right)  =E\left(  Y_{i}\left(
\mathbf{a}_{-i},a_{i}\right)  \right)  $ The following definitions are
motivated by analogous definitions for fixed counterfactuals given in \cite{hudgens2008toward}. 
The first definition gives the average direct causal
effect for unit $i$ upon changing the unit's treatment status from inactive
($a=0)$ to active ($a=1)$ while setting the treatment received by other units
to $\mathbf{a}_{-i}:$
\[
DE_{i}\left(  \mathbf{a}_{-i}\right)  \equiv\psi_{i}\left(  \mathbf{a}%
_{-i},a_{i}=1\right)  -\psi_{i}\left(  \mathbf{a}_{-i},a_{i}=0\right)  ;
\]
The second definition gives the average spillover (or ``indirect") causal effect experienced by
unit $i$ upon setting the unit's treatment inactive, while changing the
treatment of other units from inactive to $\mathbf{a}_{-i}:$%
\[
IE_{i}\left(  \mathbf{a}_{-i}\right)  \equiv\psi_{i}\left(  \mathbf{a}%
_{-i},a_{i}=0\right)  -\psi_{i}\left(  \mathbf{a}_{-i}=\mathbf{0},a_{i}=0\right)  ;
\]
Similar to \cite{hudgens2008toward} these effects can be averaged over a
hypothetical allocation regime $\pi_{i}\left(  \mathbf{a}_{-i};\alpha\right)
$ indexed by $\alpha$ to obtain allocatio\text{n-specific unit average direct
	and spillover effects} $DE_{i}\left(  \alpha\right)  =\sum_{\mathbf{a}_{-i}%
	\in\mathcal{A(}N)}\pi_{i}\left(  \mathbf{a}_{-i};\alpha\right)  DE_{i}\left(
\mathbf{a}_{-i}\right)  $ \text{ and }$IE_{i}\left(  \alpha\right)
=\sum_{\mathbf{a}_{-i}\in\mathcal{A(}N)}\pi_{i}\left(  \mathbf{a}_{-i}%
;\alpha\right)  IE_{i}\left(  \mathbf{a}_{-i}\right)  ,$ respectively. One may
further average over the units in the network to obtain allocation-specific
network average direct and spillover effects $DE\left(  \alpha\right)  =$
$N^{-1}\sum_{i}DE_{i}\left(  \alpha\right)  $ and $IE\left(  \alpha\right)  =$
$N^{-1}\sum_{i}IE_{i}\left(  \alpha\right)  $, respectively. These quantities
can further be used to obtain other related network effects such as average
total and overall effects at the unit or network level analogous to \cite{hudgens2008toward}
and \cite{tchetgen2012causal}.

Identification of these effects follow from identification of $\psi_{i}\left(
\mathbf{a}\right)  $ for each $i=1,...,N.$ In fact, under assumptions $\left(
\ref{NetCons}\right)  $-$\left(  \ref{positivity}\right)  ,$ it is
straightforward to show that $\psi_{i}\left(  \mathbf{a}\right)  $ is given by
a network version of Robins' g-formula: $\psi_{i}\left(  \mathbf{a}\right)
=\beta_{i}\left(  \mathbf{a}\right)  $ where $\beta_{i}\left(  \mathbf{a}%
\right)  \equiv\sum_{\mathbf{l}}E\left(  Y_{i}|\mathbf{A=a,L=l}\right)
f\left(  \mathbf{l}\right)  \mathbf{,}$ $f\left(  \mathbf{l}\right)  $ is the
density of $\mathbf{l,}$ and $\sum$ may be interpreted as integral when appropriate.

Although $\psi_{i}\left(  \mathbf{a}\right)  $ can be expressed as the
functional $\beta_{i}\left(  \mathbf{a}\right)  $ of the observed data law,
$\beta_{i}\left(  \mathbf{a}\right)  $ cannot be identified nonparametrically
from \emph{a single realization} $(\mathbf{Y,A,L)}$ drawn from this law
without imposing additional assumptions. In the absence of interference, it is
standard to rely on the additional assumption that $(Y_{i},A_{i}%
,L_{i}\mathbf{)}$, $i=1,...N~$are i.i.d., in which case the above g-formula
reduces to the standard g-formula $\beta_{i}\left(  \mathbf{a}\right)
=\beta\left(  a_{i}\right)  =\sum_{l}E\left(  Y_{i}|A_{i}=a_{i},L_{i}%
=l\right)  f(l\mathbf{)\,\ }$ which is nonparametrically identified \citep{robins1986new}. 
Since we consider a sample of interconnected units in a network, the
i.i.d. assumption is unrealistic. Below, we consider assumptions on the
observed data law that are much weaker, but still allow inferences about
network effects to be made.

We first introduce a convenient representation of $E\left(  Y_{i}%
|\mathbf{A=a,L=l}\right)  $, and describe a corresponding Gibbs sampling
algorithm which could in principle be used to compute the network g-formula
under the unrealistic assumption that the observed data law is known. First,
note that $\beta_{i}\left(  \mathbf{a}\right)  =\sum_{\mathbf{y,l}}%
y_{i}f\left(  \mathbf{y|A=a,L=l}\right)  f\left(  \mathbf{l}\right)  .$

Suppose that one has available the conditional densities (also referred to as
Gibbs factors) $f\left(  Y_{i}\mathbf{|Y}_{-i}=\mathbf{y}_{-i},\mathbf{a,l}%
\right)  $ and $f\left(  L_{i}\mathbf{|L}_{-i}\mathbf{=l}_{-i}\right)  $,
$i=1,...,N,\,$\ and that it is straightforward to sample from these densities.
Then, evaluation of the above formula for $\beta_{i}\left(  \mathbf{a}\right)
$ can be achieved with the following Gibbs sampling algorithm.

\underline{Gibbs Sampler I}:%

\begin{align*}
\text{for }m  &  =0,\text{let }\left(  \mathbf{L}^{(0)},\mathbf{Y}%
^{(0)}\right)  \text{ denote initial values ;}\\
\text{for }m  &  =0,...,M\\
&  \text{let }i=(m\mod N)+1;\\
&  \text{draw }L_{i}^{(m+1)}\text{ from }f\left(  L_{i}\mathbf{|L}_{-i}%
^{(m)}\right)  \text{ and }Y_{i}^{(m+1)}\text{ from }f\left(  Y_{i}%
\mathbf{|Y}_{-i}^{(m)},\mathbf{a,L}^{(m)}\right)  ;\\
&  \text{let }\mathbf{L}_{-i}^{(m+1)}\left.  =\right.  \mathbf{L}_{-i}%
^{(m)}\text{ and }\mathbf{Y}_{-i}^{(m+1)}\left.  =\right.  \mathbf{Y}%
_{-i}^{(m)}.\\
&
\end{align*}
The sequence $\left(  \mathbf{L}^{(0)},\mathbf{Y}^{(0)}\right)  ,\left(
\mathbf{L}^{(1)},\mathbf{Y}^{(1)}\right)  ,\ldots,\left(  \mathbf{L}%
^{(m)},\mathbf{Y}^{(m)}\right)  $ forms a Markov chain, which under
appropriate regularity conditions converges to the stationary distribution
$f\left(  \mathbf{Y|a,L}\right)  \times f\left(  \mathbf{L}\right)  $ \citep{liu2008monte}. 
Specifically, we assume $M$ is an integer larger than the number of
transitions necessary for the appropriate Markov chain to reach equilibrium
from the starting state. \ Thus, for sufficiently large $m^{\ast}$ and $K,$%
\[
\beta_{i}\left(  \mathbf{a}\right)  \approx K^{-1}\sum_{k=0}^{K}%
Y_{i}^{(m^{\ast}+k)}.
\]
Thus, if Gibbs factors $f\left(  Y_{i}\mathbf{|Y}_{-i}=\mathbf{y}%
_{-i},\mathbf{a,l}\right)  $ and $f\left(  L_{i}\mathbf{|L}_{-i}%
\mathbf{=l}_{-i}\right)  $ are available for every $i$, all networks causal
effects can be computed. This approach to evaluating the g-formula is the
network analogue of Monte Carlo sampling approaches to evaluating functionals
arising from the g-computation algorithm in the sequentially ignorable model,
see for instance \cite{westreich2012parametric}.
Unfortunately these factors are not identified from a single realization of
the observed data law, without additional assumptions. In the following
section we describe additional assumptions which will imply identification.

\section*{\centering 3. A GRAPHICAL STATISTICAL MODEL FOR NETWORK DATA}

To motivate our approach, we introduce a representation for network data
proposed by \cite{shpitser2017modeling} and based on chain graphs. A
chain graph (CG) \citep{lauritzen1996graphical} is a mixed graph containing undirected
($-$) and directed ($\to$) edges with the property that it is impossible to
add orientations to undirected edges in such a way as to create a directed
cycle. A chain graph without undirected edges is called a directed acyclic
graph (DAG).

A statistical model associated with a CG $\mathcal{G}$ with a vertex set
$\mathbf{O}$ is a set of densities that obey the following two level
factorization:
\begin{align}
p(\mathbf{O}) = \prod_{\mathbf{B} \in\mathcal{B}(\mathcal{G})} p(\mathbf{B}
\mid\text{pa}_{\mathcal{G}}(\mathbf{B})),\label{eqn:fact1}%
\end{align}
where $\mathcal{B}(\mathcal{G})$ is the partition of vertices in $\mathcal{G}$
into \emph{blocks}, or sets of connected components via undirected edges, and
$\text{pa}_{\mathcal{G}}(\mathbf{B})$ is the set $\{ W : W \to B
\in\mathbf{B} \text{ exists in }\mathcal{G} \}$. This outer factorization
resembles the Markov factorization of DAG models. Furthermore, each factor
$p(\mathbf{B} \mid\text{pa}_{\mathcal{G}}(\mathbf{B}))$ obeys the following
inner factorization, which is a clique factorization for a conditional Markov
random field:
\begin{align}
p(\mathbf{B} \mid\text{pa}_{\mathcal{G}}(\mathbf{B})) = \frac{1}%
{Z(\text{pa}_{\mathcal{G}}(\mathbf{B}))} \prod_{\mathbf{C} \in\mathcal{C}%
	(\mathcal{G}^{a}_{\mathbf{B} \cup\text{pa}_{\mathcal{G}}(\mathbf{B}))});
	\mathbf{C} \not \subseteq \text{pa}_{\mathcal{G}}(\mathbf{B})} \phi
_{\mathbf{C}}(\mathbf{C}),\label{eqn:fact2}%
\end{align}
where $Z(\text{pa}_{\mathcal{G}}(\mathbf{B}))$ is a normalizing function which
ensures a valid conditional density, $\mathcal{C}(\mathcal{G})$ is a set of
maximal pairwise connected components (cliques) in an undirected graph
$\mathcal{G}$, $\phi_{\mathcal{C}}(\mathbf{C})$ is a mapping from values of
$\mathbf{C}$ to real numbers, and $\mathcal{G}^{a}_{\mathbf{B} \cup
	\text{pa}_{\mathcal{G}}(\mathbf{B}))}$ is an undirected graph with vertices
$\mathbf{B} \cup\text{pa}_{\mathcal{G}}(\mathbf{B})$ and an edge between any
pair in $\text{pa}_{\mathcal{G}}(\mathbf{B})$ and any pair in $\mathbf{B}
\cup\text{pa}_{\mathcal{G}}(\mathbf{B})$ adjacent in $\mathcal{G}$.

A density $p(\mathbf{O})$ that obeys the two level factorization given by
(\ref{eqn:fact1}) and (\ref{eqn:fact2}) with respect to a CG $\mathcal{G}$ is
said to be Markov relative to $\mathcal{G}$. This factorization implies a
number of Markov properties relating conditional independences in
$p(\mathbf{O})$ and missing edges in $\mathcal{G}$. Conversely, these Markov
properties imply the factorization under an appropriate version of the
Hammersley-Clifford theorem, which does not hold for all densities, but does
hold for wide classes of densities, which includes positive densities \citep{hammersley1971markov}. Special
cases of these Markov properties are described further below. Details can be
found in \cite{lauritzen1996graphical}.

\subsection*{3.1 A chain graph representation of network data}

Observed data distributions entailed by causal models of a DAG do not
necessarily yield a good representation of network data. This is because DAGs
impose an ordering on variables that is natural in temporally ordered
longitudinal studies but not necessarily in network settings. As we now show
the Markov property associated with CGs accommodates both dependences
associated with causal or temporal orderings of variables, but also symmetric
dependences induced by the network.

Let $\mathcal{E}$ denote the set of neighboring pairs of units in the network;
that is $(i,j)\in\mathcal{E}$ only if units $i$ and $j$ are directly connected
on the network. We represent data $\mathbf{O}$ drawn from a joint distribution
associated with a network with neighboring pairs $\mathcal{E}$ as a CG
$\mathcal{G}_{\mathcal{E}}$ in which each variable corresponds to a vertex,
and directed and undirected edges of $\mathcal{G}_{\mathcal{E}}$ are defined
as follows. For each pair of units $(i,j)\in\mathcal{E}$, variables $L_{i}$
and $L_{j}$ are connected by an undirected edge in $\mathcal{G}_{\mathcal{E}}%
$. We use an undirected edge to represent the fact that $L_{i}$ and $L_{j}$
are associated, but this association is not in general due to unobserved
common causes, nor as the variables are contemporaneous can they be ordered
temporally or causally \citep{shpitser2017modeling}. Vertices for
$A_{i}$ and $A_{j},$ and $Y_{i}$ and $Y_{j}\,$\ are likewise connected by an
undirected edge in $\mathcal{G}_{\mathcal{E}}$ if and only if $(i,j)\in$
$\mathcal{E}$. Furthermore, for each $(i,j)\in$ $\mathcal{E}$, a directed edge
connects $L_{i}$ to both $A_{i}$ and $A_{j}$ encoding the fact that covariates
of a given unit may be direct causes of the unit's treatment but also of the
neighbor treatments, i.e. $L_{i}\rightarrow$ $\left\{  A_{i},A_{j}\right\}  ;$
edges $L_{i}\rightarrow$ $\left\{  Y_{i},Y_{j}\right\}  $ and $A_{i}%
\rightarrow$ $\left\{  Y_{i},Y_{j}\right\}  $ should be added to the chain
graph for a \ similar reason. As an illustration, the CG in Figure 1 corresponds 
to a three-unit network where $\mathcal{E=}\left\{  \left(  1,2\right)  ,\left(  2,3\right)  \right\}  $.

\begin{figure}[ptbh] 
	\caption{Chain graph representation of data from a network of three units}%
	\centering
	\includegraphics[scale=.3]{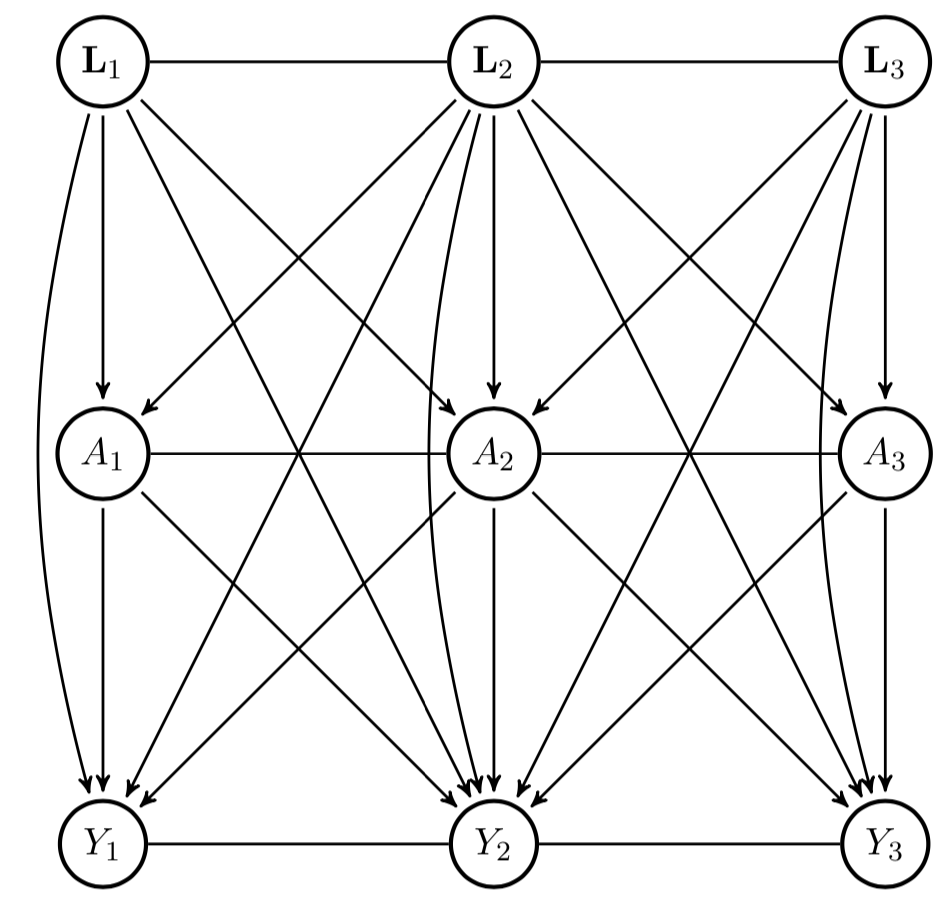}\end{figure}

We will assume the observed data distribution on $\mathbf{O}$ associated with
our network causal model is Markov relative to the CG constructed from unit
connections in a network via the above two level factorization \citep{lauritzen1996graphical}. 
This implies the observed data distribution obeys certain conditional
independence restrictions that one might intuitively expect to hold in a
network, and which serve as the basis of the proposed approach. Let
$\mathcal{N}_{i}$ denote the set of neighbors of unit $i,$ i.e. $\mathcal{N}%
_{i}=\left\{  j:\left(  i,j\right)  \in\mathcal{E}\right\}  $, and let
$\mathcal{O}_{i}=\left\{  \mathbf{O}_{j},j\in\mathcal{N}_{i}\right\}  $ denote
data observed on all neighbors of unit $i.$ Given a CG $\mathcal{G}%
_{\mathcal{E}}$ with associated neighboring pairs $\mathcal{E}$, the
following conditional independences follow by the global Markov property
associated with CGs \citep{lauritzen1996graphical}:%
\begin{align}
Y_{i}  &  \perp\!\!\!\perp\{Y_{k},A_{k},L_{k}\}|(A_{i},L_{i},\mathcal{O}%
_{i})\ \text{for all }i\text{ and }k, \text{ }k\neq
i;\text{ }\label{Markov}\\
\text{ }L_{i}  &  \perp\!\!\!\perp\mathbf{L}_{-i}\setminus\mathcal{O}%
_{i}|\mathbf{L}_{-i}\cap\mathcal{O}_{i}\text{ for all }i\text{ and }%
k\notin\mathcal{N}_{i},\text{ }k\neq i. \label{Markovii}%
\end{align}
In words, equation (\ref{Markov}) states that the outcome of a given unit can
be screened-off (i.e. made independent) from the variables of all
non-neighboring units by conditioning on the unit's treatment and covariates
as well as on all data observed on its neighboring units, where the
neighborhood structure is determined by $\mathcal{G}_{\mathcal{E}}$. That is
$(A_{i},L_{i},\mathcal{O}_{i})$ is the \emph{Markov blanket }of $Y_{i}$ in CG
$\mathcal{G}_{\mathcal{E}}$. This assumption, coupled with a sparse network
structure leads to extensive dimension reduction of the model specification
for $\mathbf{Y|A,L}$. In particular, the conditional density of $Y_{i}%
|\left\{  \mathbf{O\backslash}Y_{i}\right\}  $ only depends on $\left(
A_{i},L_{i}\right)  $ and on neighbors' data $\mathcal{O}_{i}.$ Similarly,
$\mathbf{L}_{-i}\cap\mathcal{O}_{i}$ is the Markov blanket of $L_{i}$ in CG
$\mathcal{G}_{\mathcal{E}}$.

\subsection*{3.2 Conditional auto-models}

Suppose that instead of $\left(  \ref{positivity}\right)  $, the following
stronger positivity condition holds:
\begin{equation}
\mathbb{P}\left(  \mathbf{O=o}\right)  >0,\text{ for all possible values
}\mathbf{o.} \label{positivie}%
\end{equation}

Since $\left(  \ref{Markov}\right)  $ holds for the conditional law of
$\mathbf{Y}$ given $\mathbf{A},\mathbf{L}$, it lies in the conditional MRF
(CMRF) model associated with the induced undirected graph $\mathcal{G}%
_{\mathcal{E}}^{a}$.
In addition, since $\left(  \ref{positivie}\right)  $ holds, the conditional
MRF version of the Hammersley-Clifford (H-C) theorem and
$\left(  \ref{Markov}\right)  $ imply the following version of the clique factorization in (\ref{eqn:fact2}),
\[
f\left(  \mathbf{y|a,l}\right)  =\left(  \frac{1}{\kappa\left(  \mathbf{a,l}%
	\right)  }\right)  \exp\left\{  U\left(  \mathbf{y;a,l}\right)  \right\}  ,
\]
where $\kappa\left(  \mathbf{a,l}\right)  =\sum_{\mathbf{y}}\exp\left\{
U\left(  \mathbf{y;a,l}\right)  \right\}  ,$ and $U\left(  \mathbf{y;a,l}%
\right)  $ is a \emph{conditional energy function} which can be decomposed
into a sum of terms called conditional clique potentials, with a term for
every maximal clique in the graph $\mathcal{G}_{\mathcal{E}}^{a}$ \citep{besag1974spatial}.Conditional
clique potentials offer a natural way to specify a CMRF using only terms that
depend on a small set of variables. Specifically,
\begin{align}
&  f\left(  Y_{i}=y_{i}|\mathbf{Y}_{-i}=\mathbf{y}_{-i},\mathbf{a,l}\right)
\nonumber\\
&  =f\left(  Y_{i}=y_{i}|\mathbf{Y}_{-i}=\mathbf{y}_{-i},\left\{
a_{j}\mathbf{,}l_{j}:j\in\mathcal{N}_{i}\right\}  \text{ }\right) \nonumber\\
&  =\frac{\exp\left\{  \sum_{c\in\mathcal{C}_{i}}U_{c}\left(  \mathbf{y;a,l}%
	\right)  \right\}  }{\sum_{\mathbf{y}^{\prime}\mathbf{:y}_{-i}^{\prime
		}=\mathbf{y}_{-i}}\exp\left\{  \sum_{c\in\mathcal{C}_{i}}U_{c}\left(
	\mathbf{y}^{\prime}\mathbf{;a,l}\right)  \right\}  }, \label{gibbs factor}%
\end{align}
where $\mathcal{C}_{i}$ are all maximal cliques of $\mathcal{G}_{\mathcal{E}%
}^{a}$ that involve $Y_{i}$.

Gibbs densities specified as in $\left(  \ref{gibbs factor}\right)  $ is a
rich class of densities, and are often regularized in practice by setting to
zero conditional clique potentials for cliques of size greater than a
pre-specified cut-off. This type of regularization corresponds to setting
higher order interactions terms to zero in log-linear models. For instance,
closely following \cite{besag1974spatial}, one may introduce conditions (a) only cliques
$c\in\mathcal{C}$ of size one or two have non-zero potential functions
$U_{c},$ and (b) the conditional probabilities in $\left(  \ref{gibbs factor}%
\right)  $ have an exponential family form. Under these additional conditions,
given $\mathbf{a,l,}$ the energy function takes the form
\[
U\left(  \mathbf{y;a,l}\right)  =\sum_{i\in\mathcal{G}_{\mathcal{E}}}%
y_{i}G_{i}\left(  y_{i}\mathbf{;a,l}\right)  +\sum_{\left\{  i,j\right\}
	\in\mathcal{E}}y_{i}y_{j}\theta_{ij}\left(  \mathbf{a,l}\right)  ,
\]
for some functions $G_{i}\left(  \cdot\mathbf{;a,l}\right)  $ and coefficients
$\theta_{ij}\left(  \mathbf{a,l}\right)  .$ Note that in order to be
consistent with local Markov conditions $\left(  \ref{Markov})\text{ and
	(}\ref{Markovii}\right)  ,G_{i}\left(  \cdot\mathbf{;a,l}\right)  $ can only
depend on $\left\{  \left(  a_{s},l_{s}\right)  :s\in\mathcal{N}_{j}\right\}
,$ while because of symmetry $\theta_{ij}\left(  \mathbf{a,l}\right)
$\textbf{ }can depend at most on $\left\{  \left(  a_{s},l_{s}\right)
:s\in\mathcal{N}_{j}\cap\mathcal{N}_{i}\right\}  $. Following \cite{besag1974spatial}, we
call the resulting class of models \emph{conditional auto-models}.

Conditions $\left(  \ref{Markovii}\right)  $ and $\left(  \ref{positivie}%
\right)  $ imply that $\mathbf{L}$ is an MRF; standard Hammersley-Clifford
theorem further implies that the joint density of $\mathbf{L}$ can be written
as
\[
f\left(  \mathbf{l}\right)  =\left(  \frac{1}{\nu}\right)  \exp\left\{
W\left(  \mathbf{l}\right)  \right\}
\]
where $\nu=\sum_{\mathbf{l}^{\prime}}\exp\left\{  W\left(  \mathbf{l}^{\prime
}\right)  \right\}  $, and $W\left(  \mathbf{l}\right)  $ is an energy
function which can be decomposed as a sum over cliques in the induced
undirected graph $(\mathcal{G}_{\mathcal{E}})_{\mathbf{L}}$. Analogous to the
conditional auto-model described above, we restrict attention to densities of
$\mathbf{L}$ of the form:
\begin{align}
W\left(  \mathbf{L}\right)  =\sum_{i\in\mathcal{G}_{\mathcal{E}}}\left\{
\sum_{k=1}^{p}L_{k,i}H_{k,i}\left(  L_{k,i}\right)  +\sum_{k\neq s}%
\rho_{k,s,i}L_{k,i}L_{s,i}\right\}  +\sum_{\left\{  i,j\right\}
	\in\mathcal{E}}\sum_{k=1}^{p}\sum_{s=1}^{p}\omega_{k,s,i,j}L_{k,i}L_{s,j},
\label{eqn:l-model}%
\end{align}
for some functions $H_{k,i}\left(  L_{k,i}\right)  $ and coefficients
$\rho_{k,s,i},\omega_{k,s,i,j}.$ Note that $\rho_{k,s,i}$ encodes the
association between covariate $L_{k,i}$ and covariate $L_{s,i}$ observed on
unit $i,$ while $\omega_{k,s,i,j}$ captures the association between $L_{k,i}$
observed on unit $i$ and $L_{s,j}$ observed on unit $j.$

\subsection*{3.3 Parametric specifications of auto-models}

A prominent auto-regression model for binary outcomes is the so-called
auto-logistic regression first proposed by \cite{besag1974spatial}. Note that as $\left(
\mathbf{a,l}\right)  $ is likely to be high dimensional, identification and
inference about $G_{i}$ and $\theta_{ij}$ requires one to further restrict
heterogeneity by specifying simple low dimensional parametric models for these
functions of the form$:$
\begin{align*}
G_{i}\left(  y_{i}\mathbf{;a,l}\right)   &  =\widetilde{G}_{i}\left(
\mathbf{a,l}\right)  =\mathrm{log}\frac{\Pr\left(  Y_{i}=1|\mathbf{a,l,Y}%
	_{-i}=0\right)  }{\Pr\left(  Y_{i}=0|\mathbf{a,l,Y}_{-i}=0\right)  }\\
&  =\beta_{0}+\beta_{1}a_{i}+\beta_{2}^{\prime}l_{i}+\beta_{3}\sum
_{j\in\mathcal{N}_{i}}w_{ij}^{a}a_{j}+\beta_{4}^{\prime}\sum_{j\in
	\mathcal{N}_{i}}w_{ij}^{l}l_{j};\\
\theta_{ij}  &  =w_{ij}^{y}\theta,
\end{align*}
where $w_{ij}^{a}$, $w_{ij}^{l}$, $w_{ij}^{y}$ are user specified weights
which may depend on network features associated with units $i$ and $j$, with
$\sum_{j}w_{ij}^{a}=\sum_{j}w_{ij}^{l}=\sum_{j}w_{ij}^{y}=1;$ e.g. $w_{ij}%
^{a}=1/\mathrm{card}\left(  \mathcal{N}_{i}\right)  $ standardizes the
regression coefficient by the size of a unit's neighborhood. We assume model
parameters $\tau=\left(  \beta_{0},\beta_{1},\beta_{2}^{\prime},\beta
_{3},\beta_{4}^{\prime},\theta\right)  $ are shared across units in a network. 
In addition, network features can be incorporated into the auto-models as model parameters, which may be desirable in settings where network features are confounders for the relationship between exposure and outcome. For example, one could further adjust for a unit's degree (i.e. number of ties). 

For a continuous outcome, an auto-Gaussian model may be specified as
followed:
\begin{align*}
G_{i}\left(  y_{i}\mathbf{;a,l}\right)   &  =-\left(  \frac{1}{2\sigma_{y}%
	^{2}}\right)  (y_{i}-2\mu_{y,i}\left(  \mathbf{a,l}\right)  );\\
\mu_{y,i}\left(  \mathbf{a,l}\right)   &  =\beta_{0}+\beta_{1}a_{i}+\beta
_{2}^{\prime}l_{i}+\beta_{3}\sum_{j\in\mathcal{N}_{i}}w_{ij}^{a}a_{j}%
+\beta_{4}^{\prime}\sum_{j\in\mathcal{N}_{i}}w_{ij}^{l}l_{j};\\
\theta_{ij}  &  =w_{ij}^{y}\theta,
\end{align*}
where $\mu_{y,i}\left(  \mathbf{a,l}\right)  =E\left(  Y_{i}|\mathbf{a,l,Y}%
_{-i}=0\right)  $, and $\sigma_{y}^{2}=\mathrm{var}\left(  Y_{i}%
|\mathbf{a,l,Y}_{-i}=0\right)  $. Similarly, model parameters $\tau
_{Y}=\left(  \beta_{0},\beta_{1},\beta_{2}^{\prime},\beta_{3},\beta
_{4}^{\prime},\sigma_{y}^{2},\theta\right)  $ are shared across units in the
network. Other auto-models within the exponential family can likewise be
conditionally specified, e.g. the auto-Poisson model.

Auto-model density of $\mathbf{L}$ is specified similarly. For example, fix
parameters in (\ref{eqn:l-model})
\begin{align*}
\rho_{k,s,i}  &  =\rho_{k,s},\\
\omega_{k,s,i,j}  &  =\widetilde{\omega}_{k,s}v_{i,j},
\end{align*}
where $v_{i,j}$ is a user-specified weight which satisfies $\sum_{j}v_{i,j}%
=1$. For $L_{k}$ binary, one might take%
\[
H_{k,i}\left(  L_{k,i};\tau_{k}\right)  =\tau_{k}=\mathrm{log}\frac
{\Pr\left(  L_{k,i}=1|L_{\backslash k,i}=0\mathbf{,L}_{-i}=0\right)  }%
{\Pr\left(  L_{k,i}=0|L_{\backslash k,i}=0\mathbf{,L}_{-i}=0\right)  },
\]
corresponding to a logistic auto-model for $L_{k,i}|L_{\backslash
	k,i}=0\mathbf{,L}_{-i}=0,$ while for continuous $L_{k}$%
\[
H_{k,i}\left(  L_{k,i};\tau_{k}=\left(  \sigma_{k}^{2},\mu_{k}\right)
\right)  =-\left(  \frac{1}{2\sigma_{k}^{2}}\right)  (L_{k,i}-2\mu_{k}),
\]
corresponding to a Gaussian auto-model for $L_{k,i}|L_{\backslash
	k,i}=0\mathbf{,L}_{-i}=0.$ As before, model parameters $\tau_{L}=(\tau
_{1}^{\prime},...,\tau_{p}^{\prime})$ are shared across units in the network.

\subsection*{3.4 Coding estimators of auto-models}

Suppose that one has specified auto-models for $\mathbf{Y}$ and $\mathbf{L}$
as in the previous section with unknown parameters $\tau_{Y}$ and $\tau_{L}$
respectively. To estimate these parameters, one could in principle attempt to
maximize the corresponding joint likelihood function. However, such task is
well-known to be computationally daunting as it requires a normalization step
which involves evaluating a high dimensional sum or integral which, outside
relatively simple auto-Gaussian models is generally not available in closed
form. For example, to evaluate the conditional likelihood of $\mathbf{Y|A,L}$
for binary $Y$ requires evaluating a sum of $2^{N}$ terms in order to compute
$\kappa\left(  \mathbf{A,L}\right)  \mathbf{.}$ Fortunately, less
computationally intensive strategies for estimating auto-models exist
including pseudo-likelihood estimation and so called-coding estimators \citep{besag1974spatial},
which may be adopted here. We first consider \emph{ coding-type
	estimators}, mainly because unlike pseudo-likelihood estimation, standard
asymptotic theory applies. To describe these estimators in more detail
requires additional definitions.

We define a \emph{stable set} or \emph{independent set} on $\mathcal{G}_{\mathcal{E}}$ as the
set of nodes, $\mathcal{S}\left(\mathcal{G}_{\mathcal{E}}\right)$, such that 
$$ (i,j) \notin \mathcal{E} \ \forall (i,j) \in \mathcal{S}\left(\mathcal{G}_{\mathcal{E}}\right)$$
That is, a stable set is a set of nodes with the property that no two nodes in the set 
have an edge connecting them in the network. The size of a stable set is the number of units it contains. A
maximal stable set is a stable set such that no unit in $\mathcal{G}%
_{\mathcal{E}}$ can be added without violating the independence condition. A
maximum stable set $\mathcal{S}_{\max}\left(  \mathcal{G}_{\mathcal{E}%
}\right)  $ is a maximal stable set of largest possible size for
$\mathcal{G}_{\mathcal{E}}$. This size is called the stable number or
independence number of $\mathcal{G}_{\mathcal{E}}$, which we denote
$n_{1,N}=n_{1}\left(  \mathcal{G}_{\mathcal{E}}\right)  $. A maximum stable
set is not necessarily unique in a given graph, and finding one such set and
enumerating them all is challenging but a well-studied problem of computer
science. In fact, finding a maximum stable set is a well-known
NP-complete problem. Nevertheless, both exact and approximate algorithms exist
that are computationally more efficient than an exhaustive search. Exact
algorithms which identify all maximum stable sets were described in \cite{robson1986algorithms,makino2004new,fomin2009measure}.
Unfortunately, exact algorithms for finding maximum stable sets quickly become
computationally prohibitive with moderate to large networks. In fact, the
maximum stable set problem is known not to have an efficient approximation
algorithm unless P=NP \citep{zuckerman2006linear}. A practical approach we take in this
paper is to simply use an enumeration algorithm that lists a collection of
maximal stable sets \citep{myrvold2013fast}, and pick the largest of the
maximal sets found. \ Let $\Xi_{1}=\left\{  \mathcal{S}_{\max}\left(
\mathcal{G}_{\mathcal{E}}\right)  :\mathrm{card}\left(  \mathcal{S}_{\max
}\left(  \mathcal{G}_{\mathcal{E}}\right)  \right)  =\mathrm{n}_{1}\left(
\mathcal{G}_{\mathcal{E}}\right)  \right\}$ denote the collection of all
maximum (or largest identified maximal) stable sets for $\mathcal{G}_{\mathcal{E}}$. 

The Markov property associated with $\mathcal{G}_{\mathcal{E}}$ implies that
outcomes of units within such sets are mutually conditionally independent
given their Markov blankets. This implies the (partial) conditional likelihood
function which only involves units in the stable set factorizes, suggesting
that tools from maximum likelihood estimation may apply. In the Appendix, we
establish that this is in fact the case, in the sense that under certain
regularity conditions, coding maximum likelihood estimators of $\tau$ based on
maximum (or largest identified maximal) stable sets are consistent and asymptotically normal (CAN). Consider
the coding likelihood functions for $\tau_{Y}$ and $\tau_{L}$ based on a
stable set $\mathcal{S}_{\max}\left(  \mathcal{G}_{\mathcal{E}%
}\right)  \in$ $\Xi_{1}$:
\begin{align}
\mathcal{CL}_{Y}\left(  \tau_{Y}\right)   &  =%
{\displaystyle\prod\limits_{i\in\mathcal{S}_{\max}\left(  \mathcal{G}%
		_{\mathcal{E}}\right)  }}
\mathcal{L}_{Y,\mathcal{S}_{\max}\left(  \mathcal{G}_{\mathcal{E}}\right)
	,i}\left(  \tau_{Y}\right)  =%
{\displaystyle\prod\limits_{i\in\mathcal{S}_{\max}\left(  \mathcal{G}%
		_{\mathcal{E}}\right)  }}
f\left(  Y_{i}|\mathcal{O}_{i},A_{i},L_{i};\tau_{Y}\right)  ;\label{Res LIk}\\
\mathcal{CL}_{L}\left(  \tau_{L}\right)   &  =%
{\displaystyle\prod\limits_{i\in\mathcal{S}_{\max}\left(  \mathcal{G}%
		_{\mathcal{E}}\right)  }}
\mathcal{L}_{L,\mathcal{S}_{\max}\left(  \mathcal{G}_{\mathcal{E}}\right)
	,i}\left(  \tau_{L}\right)  =%
{\displaystyle\prod\limits_{i\in\mathcal{S}_{\max}\left(  \mathcal{G}%
		_{\mathcal{E}}\right)  }}
f\left(  L_{i}|\left\{  L_{j}:j\in\mathcal{N}_{i}\right\}  ;\tau_{L}\right)  .
\label{likL}%
\end{align}
The estimators $\widehat{\tau}_{Y}=\arg\max_{\tau_{Y}}\log\mathcal{CL}%
_{Y}\left(  \tau_{Y}\right)  $ and $\widehat{\tau}_{L}=\arg\max_{\tau_{L}}%
\log\mathcal{CL}_{L}\left(  \tau_{L}\right)  $ are analogous to Besag's coding
maximum likelihood estimators. Consider a network asymptotic theory according
to which $\{\mathcal{G}_{\mathcal{E}_{N}}: N\}$ is a sequence of chain graphs
as $N\rightarrow\infty,$ with vertices $(\mathbf{A}_{\mathcal{E}}%
\mathbf{,L}_{\mathcal{E}}\mathbf{,Y}_{\mathcal{E}}\mathbf{)}$ that follow
correctly specified auto-models with unknown parameters $\left(  \tau_{Y}%
,\tau_{L}\right)  $, and with edges defined according to a sequence of
networks $\mathcal{E}_{N}$, $N = 1, 2, \ldots$ of increasing size. We
establish the following result in the Appendix

\textit{Result 1: Suppose that}$~n_{1,N}\rightarrow\infty$ \textit{as}
$N\rightarrow\infty$ \textit{then under conditions 1-6 given in the
	Appendix,}
\begin{align*}
&  \widehat{\tau}_{L}\underset{N\longrightarrow\infty}{\longrightarrow}%
\tau\text{ \textit{in probability;}}\widehat{\tau}_{Y}%
\underset{N\longrightarrow\infty}{\longrightarrow}\tau\text{ \textit{in
		probability.} }\\
&  \sqrt{n_{1,N}}\Gamma_{n_{1,N}}^{1/2}\left(  \widehat{\tau}_{L}-\tau
_{L}\right)  \underset{N\longrightarrow\infty}{\longrightarrow}N\left(
0,I\right)  ;\\
&  \sqrt{n_{1,N}}\Omega_{n_{1,N}}^{1/2}\left(  \widehat{\tau}_{Y}-\tau
_{Y}\right)  \underset{N\longrightarrow\infty}{\longrightarrow}N\left(
0,I\right)  ;\\
\Gamma_{n_{1,N}}  &  =\frac{1}{n_{1,N}}\sum_{i\in\mathcal{S}_{\max}\left(
	\mathcal{G}_{\mathcal{E}_{N}}\right)  }\left\{  \frac{\partial\log
	\mathcal{CL}_{L,\mathcal{S}_{\max}\left(  \mathcal{G}_{\mathcal{E}_{N}%
		}\right)  ,i}\left(  \tau_{L}\right)  }{\partial\tau_{L}}\right\}  ^{\otimes
	2},\\
\Omega_{n_{1,N}}  &  =\frac{1}{n_{1,N}}\sum_{i\in\mathcal{S}_{\max}\left(
	\mathcal{G}_{\mathcal{E}_{N}}\right)  }\left\{  \frac{\partial\log
	\mathcal{CL}_{Y,\mathcal{S}_{\max}\left(  \mathcal{G}_{\mathcal{E}_{N}%
		}\right)  ,i}\left(  \tau_{Y}\right)  }{\partial\tau_{Y}}\right\}  ^{\otimes
	2}.
\end{align*}
\textit{ }

Note that by the information equality, $\Gamma_{n_{1,N}}$ and $\Omega
_{n_{1,N}}$ can be replaced by the standardized (by $n_{1,N}$) negative second
derivative matrix of corresponding coding log likelihood functions. Note also
that condition $n_{1,N}\rightarrow\infty$ as $N\rightarrow\infty$ essentially
rules out the presence of an ever-growing hub on the network as it expands
with $N$, thus ensuring that there is no small set of units in which majority
of connections are concentrated asymptotically. Suppose that each unit on a
network of size $N$ is connected to no more than $C_{\max}<N,$ then according
to Brooks' Theorem, the stable number $n_{1,N}$ satisfies the inequalities
\citep{brooks1941colouring}: 
\[
\frac{N}{C_{\max}+1}\leq n_{1,N}\leq N.
\]
This implies that in a network of bounded degree, $n_{1,N}=O\left(  N\right)
$ is guaranteed to be of the same order as the size of the network$;$ however
$n_{1,N}$ may grow at substantially slower rates $(n_{1,N}=o(N))$ if $C_{\max
}$ is unbounded.

\subsection*{3.5 Pseudo-likelihood estimation}

Note that because $L_{i}$ is likely multivariate, further computational
simplification can be achieved by replacing $f\left(  L_{i}|\left\{
L_{j}:j\in\mathcal{N}_{i}\right\}  ;\tau_{L}\right)  $ with the
pseudo-likelihood (PL) function
\[%
{\displaystyle\prod\limits_{s=1}^{p}}
f\left(  L_{s,i}|L_{\backslash s,i},\left\{  L_{j}:j\in\mathcal{N}%
_{i}\right\}  ;\tau_{L}\right)
\]
in equation $\left(  \ref{likL}\right)  .$ This substitution is
computationally more efficient as it obviates the need to evaluate a
multivariate integral in order to normalize the joint law of $L_{i}$. Let
$\widetilde{\tau}$ denote the estimator which maximizes the log of the
resulting modified coding likelihood function $\mathcal{L}_{L,\mathcal{S}%
	_{\max}\left(  \mathcal{G}_{\mathcal{E}}\right)  ,i}^{\ast}\left(  \tau
_{L}\right)  .$ It is straightforward using the proof of Result 1 to establish
that its covariance may be approximated by the sandwich formula $\Phi
_{L,n_{1,N}}^{-1}\Gamma_{L,n_{1,N}}\Phi_{L,n_{1,N}}^{-1}$ \citep{guyon1995random}, where
\begin{align*}
\Gamma_{L,n_{1,N}}  &  =\frac{1}{n_{1,N}}\sum_{i\in\mathcal{S}_{\max}\left(
	\mathcal{G}_{\mathcal{E}}\right)  }\left\{  \frac{\partial\log\mathcal{L}%
	_{L,\mathcal{S}_{\max}\left(  \mathcal{G}_{\mathcal{E}}\right)  ,i}^{\ast
	}\left(  \tau_{L}\right)  }{\partial\tau_{L}}\right\}  ^{\otimes2},\\
\Phi_{L,n_{1,N}}  &  =\frac{1}{n_{1,N}}\sum_{i\in\mathcal{S}_{\max}\left(
	\mathcal{G}_{\mathcal{E}}\right)  }\left\{  \frac{\partial^{2}\log
	\mathcal{L}_{L,\mathcal{S}_{\max}\left(  \mathcal{G}_{\mathcal{E}}\right)
		,i}^{\ast}\left(  \tau_{L}\right)  }{\partial\tau_{L}\partial\tau_{L}^{T}%
}\right\}  .
\end{align*}

As later illustrated in extensive simulation studies, coding estimators can be
inefficient, since the partial conditional likelihood function associated with
coding estimators disregards contributions of units $i\not \in \mathcal{S}%
_{\max}\left(  \mathcal{G}_{\mathcal{E}}\right)  .$ Substantial information
may be recovered by combining multiple coding estimators each obtained from a
separate approximate maximum stable set, however accounting for dependence
between the different estimators can be challenging.

Pseudo-likelihood (PL) estimation offers a simple alternative approach which
is potentially more efficient than either approach described above. PL
estimators maximize the log-PLs
\begin{align}
\log\left\{  \mathcal{PL}_{Y}\left(  \tau_{Y}\right)  \right\}   &  =%
{\displaystyle\sum\limits_{i\in\mathcal{G}_{\mathcal{E}}}}
\log f\left(  Y_{i}|\mathcal{O}_{i},A_{i},L_{i};\tau_{Y}\right)  ;\\
\log\left\{  \mathcal{PL}_{L}\left(  \tau_{L}\right)  \right\}   &  =%
{\displaystyle\sum\limits_{i\in\mathcal{G}_{\mathcal{E}}}}
\log f\left(  L_{k,i}| \left\{  L_{s,i}:s \in\{1,..,p\} \setminus k \right\} ,
\left\{  L_{s,j}: s \in\{1,..,p\}, j\in\mathcal{N}_{i}\right\}  ;\tau
_{L}\right)  .
\end{align}
Denote corresponding estimators $\check{\tau}_{Y}$ and $\check{\tau}_{L}$,
$\ $which are shown to be consistent in the Appendix. There however is
generally no guarantee that their asymptotic distribution follows a Gaussian
distribution due to complex dependence between units on the network
prohibiting application of the central limit theorem. As a consequence, for
inference, we recommend using the parametric bootstrap, whereby algorithm
Gibbs sampler I\ of Section 2.2 may be used to generate multiple bootstrap
samples from the observed data likelihood evaluated at $\left(  \check{\tau
}_{Y},\check{\tau}_{L}\right)  ,$ which in turn can be used to obtain a
bootstrap distribution for $\left(  \check{\tau}_{Y},\check{\tau}_{L}\right)
$ and corresponding inferences such as bootstrap quantile confidence intervals.

\section*{\centering 4. AUTO-G-COMPUTATION}

We now return to the main goal of the paper, which is to obtain valid
inferences about $\beta_{i}\left(  \mathbf{a}\right)  .$ The
auto-G-computation algorithm entails evaluating
\[
\widehat{\beta}_{i}\left(  \mathbf{a}\right)  \approx K^{-1}\sum_{k=0}%
^{K}\widehat{Y}_{i}^{(m^{\ast}+k)},
\]
where $\widehat{Y}_{i}^{(m)}$ are generated by Gibbs Sampler I algorithm under
posited auto-models with estimated parameters $\left(  \hat{\tau}_{Y}%
,\hat{\tau}_{L}\right)  .$ An analogous estimator $\breve{\beta}_{i}\left(
\mathbf{a}\right)  $ can be obtained using $\left(  \check{\tau}_{Y}%
,\check{\tau}_{L}\right)  $ instead of $\left(  \hat{\tau}_{Y},\hat{\tau}%
_{L}\right)  .$ In either case, the parametric bootstrap may be used in
conjunction with Gibbs Sampler I in order to generate the corresponding
bootstrap distribution of estimators of $\beta_{i}\left(  \mathbf{a}\right)  $
conditional on either $\widehat{\beta}_{i}\left(  \mathbf{a}\right)  $ or
$\breve{\beta}_{i}\left(  \mathbf{a}\right)  $. Alternatively, a less
computationally intensive approach first generates i.i.d. samples $\tau
_{Y}^{(j)}$ and $\tau_{L}^{(j)}$ $,$ $j=1,...J$ from $N\left(  \hat{\tau}%
_{Y},\widehat{\Gamma}_{n_{1,N}}\right)  $ and $N\left(  \hat{\tau}%
_{L},\widehat{\Omega}_{n_{1,N}}\right)  $ respectively, conditional on the
observed data, where $\widehat{\Gamma}_{n_{1,N}}=\Gamma_{n_{1,N}}\left(
\widehat{\tau}_{L}\right)  $ and $\widehat{\Omega}_{n_{1,N}}=\Omega_{n_{1,N}%
}\left(  \widehat{\tau}_{Y}\right)  $ estimate $\Gamma_{n_{1,N}}$ and
$\Omega_{n_{1,N}}.$. Next, one computes corresponding estimators
$\widehat{\beta}_{i}^{(j)}\left(  \mathbf{a}\right)  $ based on simulated data
generated using Gibbs Sampler I algorithm under $\tau_{Y}^{(j)}$ and $\tau
_{L}^{(j)},$ $j=1,...,J.$ The empirical distribution of $\left\{
\widehat{\beta}_{i}^{(j)}\left(  \mathbf{a}\right)  :j\right\}  $ may be used
to obtain standard errors for $\widehat{\beta}_{i}\left(  \mathbf{a}\right)
$, and corresponding Wald type or quantile-based confidence intervals for
direct and spillover causal effects.

\section*{\centering 5. SIMULATION STUDY}

We performed an extensive simulation study to evaluate the performance of the
proposed methods on networks of varying density and size. Specifically, we
investigated the properties of the coding-type and pseudo-likelihood
estimators of unknown parameters $\tau_{Y}$ and $\tau_{L}$ indexing the joint
observed data likelihood. Additionally, we evaluated the performance of
proposed estimators of the network counterfactual mean $\beta(\alpha
)=N^{-1}\sum_{i=1}^{N}\sum_{\mathbf{a}_{-i}\in\mathcal{A(}N)}\pi_{i}\left(
\mathbf{a}_{-i};\alpha\right)  E\left(  Y_{i}\left(  \mathbf{a}\right)
\right)  ,$ as well as for the direct effect $DE(\alpha),$ and the spillover
effect $IE(\alpha)$, where $\alpha$ is a specified treatment allocation law
described below.

We simulated three networks of size 800 with varying densities: low (each node
has either 2, 3, or 4 neighbors), medium (each node has either 5, 6, or 7
neighbors), and high (each node has either 8, 9, or 10 neighbors). For
reference, a depiction of the low density network of size 800 is given in
Figure 2. Additionally, we simulated low density networks of size 200, 400, and 1,000. 
The network graphs were all simulated in Wolfram Mathematica 10
using the RandomGraph function. For each network, we obtained
an (approximate) maximum stable set. The stable sets for the 800 node networks
were of size $n_{1,low}=375$, $n_{1,med}=275$, $n_{1,high}=224$.

\begin{figure}[tbh]
	\caption{Network of size 800 with low density }%
	\centering
	\includegraphics[width = 4in]{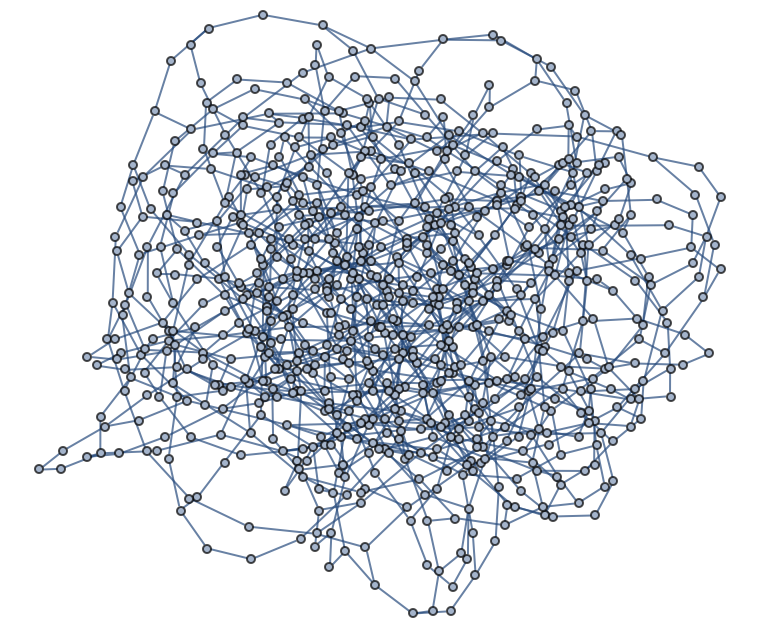}\end{figure}

\ For units $i=1,...,N$, we generated using Gibbs Sampler I$\ $a vector of
binary confounders $\{L_{1i},L_{2i},L_{3i}\}$, a binary treatment assignment
$A_{i}$, and a binary outcome $Y_{i}$ from the following auto-models
consistent with the chain graph induced by the simulated network:%

\footnotesize
\begin{align*}
Pr(L_{1,i}=1\mid\mathbf{L}_{\setminus1,i},\{\mathbf{L}_{1,j}:j\in
\mathcal{N}_{i}\})  &  =\mathrm{expit}\bigg(\tau_{1}+\rho_{12}L_{2,i}%
+\rho_{13}L_{3,i}+\nu_{11}\sum_{j\in\mathcal{N}_{i}}L_{1,j}+\nu_{12}\sum
_{j\in\mathcal{N}_{i}}L_{2,j}+\nu_{13}\sum_{j\in\mathcal{N}_{i}}%
L_{3,j}\bigg)\\
Pr(L_{2,i}=1\mid\mathbf{L}_{\setminus2,i},\{\mathbf{L}_{j}:j\in\mathcal{N}%
_{i}\})  &  =\mathrm{expit}\bigg(\tau_{2}+\rho_{12}L_{1,i}+\rho_{23}%
L_{3,i}+\nu_{21}\sum_{j\in\mathcal{N}_{i}}L_{1,j}+\nu_{22}\sum_{j\in
	\mathcal{N}_{i}}L_{2,j}+\nu_{23}\sum_{j\in\mathcal{N}_{i}}L_{3,j}\bigg)\\
Pr(L_{3,i}=1\mid\mid\mathbf{L}_{\setminus3,i},\{\mathbf{L}_{j}:j\in
\mathcal{N}_{i}\})  &  =\mathrm{expit}\bigg(\tau_{3}+\rho_{13}L_{1,i}%
+\rho_{23}L_{2,i}+\nu_{31}\sum_{j\in\mathcal{N}_{i}}L_{1,j}+\nu_{32}\sum
_{j\in\mathcal{N}_{i}}L_{2,j}+\nu_{33}\sum_{j\in\mathcal{N}_{i}}L_{3,j}\bigg)
\end{align*}

\begin{align*}
Pr(A_{i}=1\mid L_{i},\{A_{j},\mathbf{L}_{j}:j\in\mathcal{N}_{i}\})  &
=\mathrm{expit}\bigg(\gamma_{0}+\gamma_{1}L_{1,i}+\gamma_{2}\sum
_{j\in\mathcal{N}_{i}}L_{1,j}+\gamma_{3}L_{2,i}\\
&  \hspace{2cm}+\gamma_{4}\sum_{j\in\mathcal{N}_{i}}L_{2,j}+\gamma_{5}%
L_{3,i}+\gamma_{6}\sum_{j\in\mathcal{N}_{i}}L_{3,j}+\gamma_{7}\sum
_{j\in\mathcal{N}_{i}}A_{j}\bigg)
\end{align*}

\begin{align*}
Pr(Y_{i}=1\mid A_{i},L_{i},\mathcal{O}_{i})  &  =\mathrm{expit}\bigg(\beta
_{0}+\beta_{1}A_{i}+\beta_{2}\sum_{j\in\mathcal{N}_{i}}A_{j}+\beta_{3}%
L_{1,i}+\beta_{4}\sum_{j\in\mathcal{N}_{i}}L_{1,j}\\
&  \hspace{2cm}+\beta_{5}L_{2,i}+\beta_{6}\sum_{j\in\mathcal{N}_{i}}%
L_{2,j}+\beta_{7}L_{3,i}+\beta_{8}\sum_{j\in\mathcal{N}_{i}}L_{3,j}+\beta
_{9}\sum_{j\in\mathcal{N}_{i}}Y_{j}\bigg)
\end{align*}
\normalsize

\noindent where $\mathrm{expit}\left(  x\right)  =(1+\exp\left(  -x\right)
)^{-1},\tau_{L}=\{\tau_{1},\tau_{2},\tau_{3},\rho_{12},\rho_{13},\rho_{23}%
,\nu_{11},\nu_{12},\nu_{13},\nu_{22},\nu_{21},\nu_{23},\nu_{33},\nu_{31}%
,\nu_{32}\}$, $\tau_{A}=\{\gamma_{0},...,\gamma_{7}\}$, and $\tau_{Y}%
=\{\beta_{0},...,\beta_{9}\}$. We evaluated network average direct and
spillover effects via the Gibbs Sampler I algorithm under true parameter
values $\tau_{Y}$ and $\tau_{L}$ and a treatment allocation, $\alpha$ given by
a binomial distribution with event probability equal to $0.7$. All parameter
values are summarized in Table 1. \ We generated $S=1,000$ simulations of the
chain graph for each of the 4 simulated network structures. For each
simulation $s$, data were generated by running the Gibbs sampler I algorithm
$4,000$ times with the first $1,000$ iterations as burn-in. Additionally, we
thinned the chain by retaining every third realization to reduce
autocorrelation. \newline

\begin{table}[tbh]
	\caption{True parameter values for simulation study}
	\centering
	\begin{tabular}
		[c]{cc}\hline
		\textbf{Parameter} & \textbf{Truth}\\\hline
		$\tau_{L}$ & (-1.0,0.50,-0.50,0.1,0.2,0.1,0.1,0,0,0.1,0,0,0.1,0,0)\\
		$\tau_{A}$ & (-1.00,0.50,0.10,0.20,0.05,0.25,-0.08,0.30\\
		$\tau_{Y}$ &
		(-0.30,-0.60,-0.20,-0.20,-0.05,-0.10,-0.01,0.40,0.01,0.20)\\\hline
	\end{tabular}
\end{table}

\noindent For each realization of the chain graph, $\mathcal{G}_{\mathcal{E}%
	_{N},s}$, we estimated $\tau_{Y}$ via coding-type maximum likelihood
estimation and $\tau_{L}$ via the modified coding estimator. Both sets of
parameters were also estimated via maximum pseudo-likelihood estimation. For
each estimator we computed corresponding causal effect estimators, their
standard errors and $95\%$ Wald confidence intervals as outlined in previous
Sections. The estimation of the auto-model parameters was computed in 
R using functions \texttt{optim()} and \texttt{glm()}(R Core Team, 2013). 
The network average causal effects were estimated using Gibbs Sampler I 
using the \texttt{agcEffect} function in the  \texttt{autognet} R package by 
plugging in estimates for $(\tau_{L},\tau_{Y})$ using $K=50$ iterations and 
a burn-in of $m^{\ast}=10$ iterations. For variance estimation of the coding-type estimator, 
200 bootstrap replications were used.

Simulation results for the various density networks of size 800 are summarized
in Tables 2 and 3 for the following parameters: the network average
counterfactual $\beta(\alpha),$ the network average direct effect, and the
network average spillover effect. Both coding and pseudo-likelihood estimators
had small bias in estimating $\beta(\alpha)$ regardless of network density
(absolute bias $<0.01$). Coverage of the coding estimator ranged between
$93.1\%$ and $94.5\%$. Biases were also small for both spillover and direct
effects: the bias slightly increased with network density, but still stayed
below an absolute bias of $0.01$. Coverage of coding-based confidence
intervals for direct effects ranged from $92.5\%$ to $95.6\%$, while the
coverage for spillover effects decreased slightly with network density from
$93.7\%$ to $92.2\%$. It is important to note that as the network structure
changes with network size and density, the corresponding estimated parameters
likewise vary and therefore it is not necessarily straightforward to compare
performance of the methodology across network structure. Table 3 gives the MC
variance for the pseudo-likelihood estimator which confirms greater efficiency
compared to the coding estimator given the significantly larger effective
sample size used by pseudo-likelihood. Appendix Tables 1-3 report
bias and coverage for the network causal effect parameters for low density
networks of size 200, 400, and 1,000. Additionally, Appendix Figures 1 and 2
report bias and coverage for all 25 auto-model parameters in the
low-density network of size 800. As predicted by theory, coding-type and
pseudo-likelihood estimators exhibit small bias. Additionally, coding-type
estimators had approximately correct coverage, while pseudo-likelihood
estimators had coverage substantially lower than the nominal level for a
number of auto-model parameters. These results confirm the anticipated failure
of pseudo-likelihood estimators to be asymptotically Gaussian. Most notably,
the coverage for the outcome auto-model coefficient capturing dependence on
neighbors' outcomes $\beta_{9}$ was $81\%$, while coverage of the coding-type
Wald CI for this coefficient was $94\%$. Although not shown here, the coverage
results for the auto-model parameters are consistent across all simulations.

We also assessed the performance of auto-g-computation in small, dense networks and in the presence of missing network edges. For the first, we generated one network of size 100  ($n_{1,100} = 25$) and an additional network of size 200 ($n_{1,200} = 57$). For the network of size 100, coding estimation of auto-model parameters in 437 of the 1,000 simulated samples had convergence issues due to the small size of the maximal independent set. Excluding results with convergence issues, the causal estimates were biased and did not have correct coverage (see Appendix Figure 3a). The performance for the network of size 200 was much improved across these endpoints, though oftentimes the confidence intervals were too wide to be informative. In both cases, the pseudo-likelihood estimator exhibited less bias than the coding estimator. In the previously described dense network of size 800, we randomly removed 564 (14\%) of edges. The estimated parameters from the auto-models were unbiased and had correct coverage (see Appendix Figure 4). However, the causal estimates for both the coding and pseudo-likelihood estimators exhibited bias, and the coding estimator had coverage slightly below the nominal level with the estimated spillover effect shifted towards null (see Appendix Figure 5). 

\begin{table}[!htbp]
	\caption{Simulation results of coding based estimators of network causal
		effects for networks of size 800 by density}%
	\centering
	\begin{tabular}
		[c]{cccccc}
		&  &  &  &  & \\[-1.8ex]\hline
		& Truth & Bias & MC Variance & Robust Variance & 95\% CI
		Coverage\\\hline
		\multicolumn{1}{l}{Low ($n_{1} = 375$)} &  &  &  &  & \\
		$\beta(\alpha)$ & 0.211 & $0.001$ & 0.001 & 0.001 & 0.945\\
		Spillover & -0.166 & $0.002$ & 0.004 & 0.004 & 0.937\\
		Direct & -0.179 & $0.002$ & 0.002 & 0.002 & 0.943\\
		&  &  &  &  & \\
		\multicolumn{1}{l}{Medium ($n_{1} = 275$)} &  &  &  &  & \\
		$\beta(\alpha)$ & 0.209 & $0.003$ & 0.001 & 0.002 & 0.931\\
		Spillover & -0.170 & $0.007$ & 0.013 & 0.015 & 0.925\\
		Direct & -0.178 & $<0.001$ & 0.003 & 0.003 & 0.925\\
		&  &  &  &  & \\
		\multicolumn{1}{l}{High ($n_{1} = 224$)} &  &  &  &  & \\
		$\beta(\alpha)$ & 0.208 & 0.004 & 0.005 & 0.004 & 0.937\\
		Spillover & -0.171 & 0.001 & 0.032 & 0.027 & 0.922 \\
		Direct & -0.177 & -0.001 & 0.004 & 0.004 & 0.956\\\hline
	\end{tabular}
\end{table}

\begin{table}[!htbp]
	\caption{Simulation results of pseudo-likelihood based estimators of network
		causal effects for networks of size 800 by density}%
	\centering
	\begin{tabular}
		[c]{cccc}
		&  &  & \\[-1.8ex]\hline
		& Truth & Absolute Bias & MC Variance\\\hline
		\multicolumn{1}{l}{Low} &  &  & \\
		$\beta(\alpha)$ & 0.211 & $0.001$ & $<0.001$\\
		Spillover & -0.166 & $0.002$ & 0.002\\
		Direct & -0.179 & $0.002$ & 0.001\\
		&  &  & \\
		\multicolumn{1}{l}{Medium} &  &  & \\
		$\beta(\alpha)$ & 0.209 & $0.003$ & 0.001\\
		Spillover & -0.170 & $0.007$ & 0.005\\
		Direct & -0.178 & $<0.001$ & 0.001\\
		&  &  & \\
		\multicolumn{1}{l}{High} &  &  & \\
		$\beta(\alpha)$ & 0.208 & $0.004$ & 0.001\\
		Spillover & -0.171 & $0.001$ & 0.006\\
		Direct & -0.177 & $-0.001$ & 0.001\\\hline
	\end{tabular}
\end{table}

\section*{\centering 6. DATA APPLICATION}

We consider an application of the \textit{auto-g-computation} algorithm to the Networks, Norms, and HIV/STI Risk Among Youth (NNAHRAY) study to assess the effect of past incarceration on infection with HIV, STI, or Hepatitis C accounting for the network structure \citep{khan2009incarceration}. The NNAHRAY study was conducted in a New York neighborhood with epidemic HIV and widespread drug use from 2002-2005 \citep{friedman2008relative}. Through in-person interviews, information was collected regarding the respondents' demographic characteristics, incarceration history, sexual partnerships and histories, and past drug use. At the time of the interviews, respondents were also tested for HIV, gonorrhea, chlamydia, Herpes Simple Virus (HSV) 2, Hepatitis C virus (HCV), and syphilis. The study population we consider includes all interviewed persons with recorded results from their HIV, STI, and HCV tests ($n=8$ persons missing) for a total sample size of $N=457$ persons. We assume that HIV/STI/HCV status is missing completely at random. We defined a network tie (i.e. edge) as a sexual and/or injection drug use partnership in the past three months if at least one of the partners reported the relationship. The network structure is given in Figure 3. The number of partners (i.e. neighbors) for each respondent varied from none to 10 resulting in a maximal independent set of $n_1 = 274$.

We estimated the network-level spillover and direct effect of past incarceration on infection with HIV, STI, or Hepatitis C (HCV) under a Bernoulli allocation strategy with treatment probability equal to 0.50. Past incarceration was defined as any amount of jail time in the respondents' history. We accounted for confounding by Latino/a ethnicity, age, education, and past illicit drug use. The same models and estimation procedure detailed in the simulation section were utilized; note that $\nu_{ij}$ where $i \neq j$ were assumed to be 0. For comparison, the auto-model parameters were estimated using the coding-type and pseudolikelihood estimators. Network average spillover and direct effects were restricted to persons with at least one network tie. 

Table 4 gives the outcome auto-model parameter point estimates for the coding and pseudolikelihood estimators with 95\% confidence intervals for the coding estimators excluding the covariate terms. Due to scaling by number of network ties, the outcome and exposure influence of network ties can be interpreted as the effect of average covariate value among network ties. Individuals who experienced prior incarceration had 2.12 [95\% CI: 1.07-4.21] times the odds of infection with HIV/STI/HCV compared to those without prior incarceration. However, the incarceration status of network ties was not significantly associated with a person's risk of HIV/STI/HCV (OR= 1.21 [95\% CI: 0.52-2.84]) conditional on the neighbors' outcomes. Individual's with a greater proportion of their ties infected with HIV, STI, and/or HCV were much more likely to be infected with HIV, STI, and/or HCV (OR=3.07 [95\% CI: 1.33-7.09]). The pseudolikelihood point estimates were similar to the coding results. The full results for auto-model parameters from both the covariate and outcome model are given in Appendix Figure 6. The network average direct effect is 0.14 [95\% CI: 0.02-0.28] when the proportion of persons with prior history of incarceration is 0.50. There was no significant evidence of a spillover effect of incarceration on HIV/STI/HCV risk over the network, as increasing the proportion of persons with a history of incarceration from 0 to 0.50 resulted in a negligible increase in average HIV/STI/HCV risk of a person with no prior incarceration [$\widehat{DE}$=0.04; 95\% CI: -0.06-0.14]. 

In the Appendix, we have included two alternate outcome auto-model specifications that incorporate the number of sexual and injection drug use partners for each person in the network. In an infection disease setting, the number of partners should in principle be accounted for in the analysis as it is likely a confounder for the effect of incarceration (both individual and neighbors' status) on infection status \citep{khan2018dissolution}. As shown in the Appendix, adjusting for the number of network ties (e.g  sexual and injection drug partners) did not change our conclusions. Following a reviewer's recommendation, we performed a simulation study based on the NNAHRAY network under under the sharp null and verified that we have valid inference in this setting. Results are provided in the Appendix Table 7.

\begin{figure}[!htbp]
	\centering
	\begin{minipage}{.8\textwidth}
		\centering
		\caption{Network graph from the NNAHRAY data (N=457) with individuals in the maximal independent set ($n_1 = 274$) in blue.}
		\includegraphics[scale=.70]{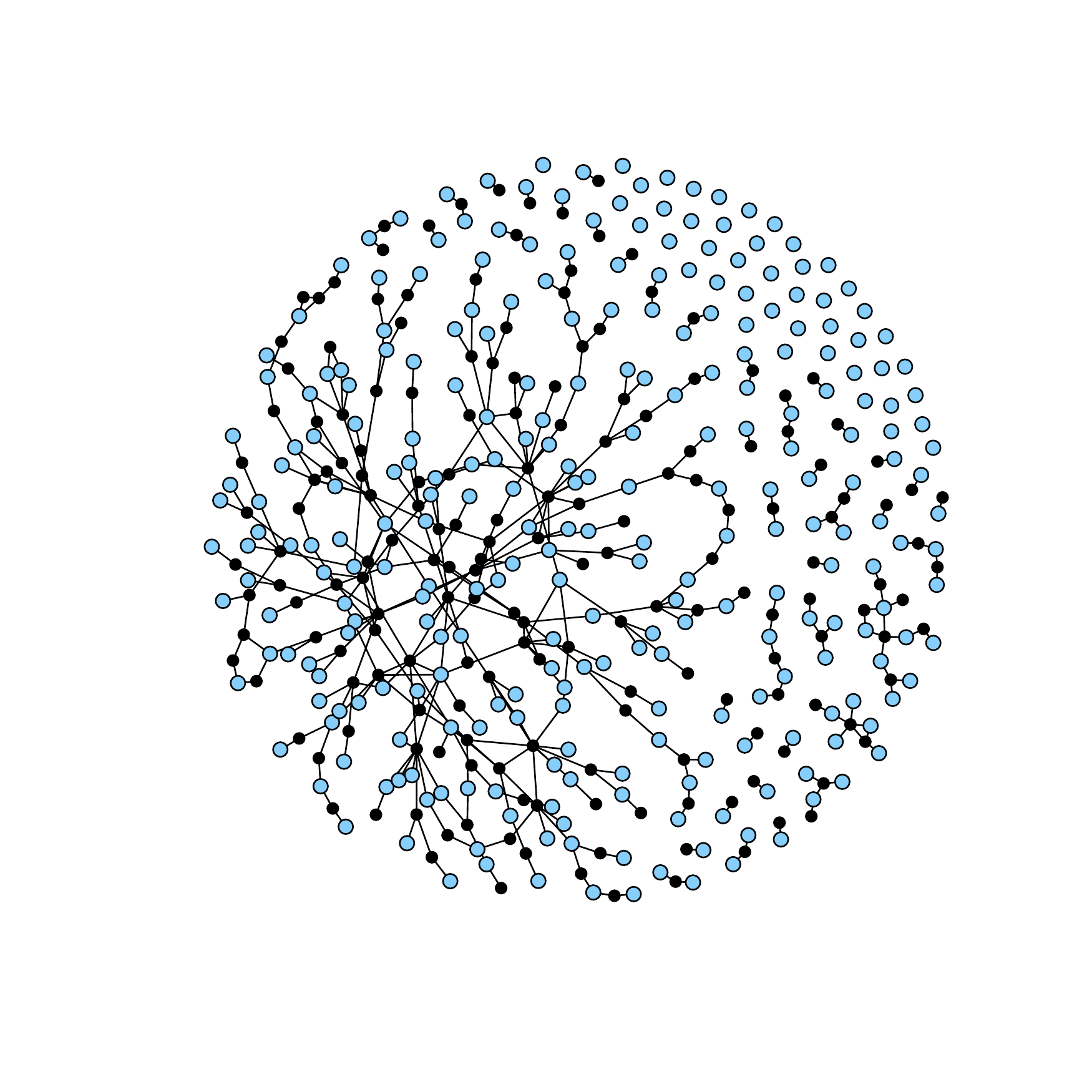} 
	\end{minipage}
\end{figure}

\begin{table}[!htbp]
	\caption{Outcome auto-model parameters estimates for coding and pseudolikelihood estimators (excluding covariates) on the odds ratio scale}%
	\centering
	\begin{tabular}
		[c]{lcc}
		&  & \\[-1.8ex]\hline
		& Estimates & 95\% CI \\\hline
		\multicolumn{1}{l}{\textbf{Coding}} &  &  \\
		\ \ Past incarceration status (individual) & 2.12 & [1.07, 4.21] \\
		\ \ Past incarceration status (neighbors)  & 1.21 & [0.52, 2.84]  \\
		\ \ HIV/STI/HCV status (neighbors) & 3.07 & [1.33, 7.09]  \\
		&  &  \\
		\multicolumn{1}{l}{\textbf{Pseudolikelihood}}& &   \\
		\ \ Past incarceration status (individual) & 2.36 & -- \\
		\ \ Past incarceration status (neighbors)  & 0.97 & --\\
		\ \ HIV/STI/HCV status (neighbors) & 2.62 & -- \\\hline
	\end{tabular}
\end{table}

\section*{\centering 7. CONCLUSION}

We have described a new approach for evaluating causal effects on a network of
connected units. Our methodology relies on the crucial assumption that
accurate information on network ties between observed units is available to
the analyst, which may not always be the case in practice. In fact, as demonstrated in our simulation study, bias may ensue if information about the network is incomplete, and therefore omits to account for all existing ties.\ In future work, we plan to further develop our methods to appropriately account for uncertainty about the underlying network structure.

Another limitation of the proposed approach is that it relies heavily on
parametric assumptions and as a result may be open to bias due to model
mis-specification. Although this limitation also applies to standard
g-computation for i.i.d settings which nevertheless has gained prominence in
epidemiology \citep{taubman2009intervening,robins2004effects,daniel2011gformula}, our
parametric auto-models which are inherently non-i.i.d may be substantially
more complex, as they must appropriately account both for outcome and
covariate dependence, as well as for interference. Developing appropriate
goodness-of-fit tests for auto-models is clearly a priority for future
research. In addition, to further alleviate concerns about modeling bias, we
plan in future work to extend semiparametric models such as structural nested
models to the network context.\ Such developments may offer a real opportunity
for more robust inference about network causal effects.

\section*{\centering SUPPLEMENTARY MATERIALS}

\begin{description}

\item[Acknowledgments:] We are grateful to Dr. Samuel R. Friedman at National Development and Research Institutes, Inc. for  access to the Networks, Norms, and HIV/STI Risk Among Youth study data and contributions to the data application section.

\item[Appendix:] Theorems, detailed proofs, and additional simulation results
(.zip file)

\item[Code:] Code for estimation and inference of network causal effects. To
download, please visit: https://isabelfulcher.github.io/autoGnetworks/ (R)
\end{description}

\newpage
\bibliography{references}

\newpage 

\begin{center}
{\LARGE \textit{Appendix}}
	
\end{center}

Throughout, we assume that we observe a vector of Random Fields $\left(
\mathbf{Y}_{N},\mathbf{A}_{N},\mathbf{L}_{N}\right)  ,$ such that
$\mathbf{Y}_{N}$ is a conditional Markov Random Field (MRF) given
$(\mathbf{A}_{N},\mathbf{L}_{N})$ on the sequence of Chain Graphs (CG)
$\mathcal{G}_{\mathcal{E}}$ associated with an increasing sequence of networks
$\mathcal{E}_{N},$ with distribution uniquely specified by the parametric
model for its Gibbs factors $f\left(  Y_{i}|\partial_{i};\tau_{Y}\right)  $
where $\partial_{i}=\left(  \mathcal{O}_{i},A_{i},L_{i}\right)  $ for all
$i=1,...,N.$

Let $\mathcal{N}_{i}^{(k)}$ denote the $kth$ order neighborhood of unit $i,$
defined as followed: $\mathcal{N}_{i}^{(1)}=$ $\mathcal{N}_{i},$
$\mathcal{N}_{i}^{(2)}=%
{\displaystyle\bigcup\limits_{j\in\mathcal{N}_{i}^{(1)}}}
\mathcal{N}_{j}^{(1)}\backslash\left(  \mathcal{N}_{i}^{(1)}\cup\{i\}\right)
,...,\mathcal{N}_{i}^{(k)}=%
{\displaystyle\bigcup\limits_{j\in\mathcal{N}_{i}^{(k-1)}}}
\mathcal{N}_{j}^{(1)}\backslash\left(
{\displaystyle\bigcup\limits_{s\leq k-1}}
\mathcal{N}_{i}^{(s)}\cup\{i\}\right)  .$ A\emph{ k-stable set} $\mathcal{S}%
^{(k)}\left(  \mathcal{G}_{\mathcal{E}},k\right)  $ of $\mathcal{G}%
_{\mathcal{E}}$ is a set of units $\left(  i,j\right)  $ in $\mathcal{G}%
_{\mathcal{E}}$ such that $%
{\displaystyle\bigcup\limits_{s\leq k}}
\mathcal{N}_{i}^{(s)}$ and $%
{\displaystyle\bigcup\limits_{s\leq k}}
\mathcal{N}_{j}^{(s)}$ are not neighbors$.$ The size of a k-stable set is the
number of units it contains. A maximal k-stable set is a k-stable set such
that no unit in $\mathcal{G}_{\mathcal{E}}$ can be added without violating the
independence condition. \ A maximum k-stable set $\mathcal{S}_{k,\max}\left(
\mathcal{G}_{\mathcal{E}}\right)  $ is a maximal k-stable set of largest
possible size for $\mathcal{G}_{\mathcal{E}}$. This size is called the
k-stable number of $\mathcal{G}_{\mathcal{E}}$, which we denote $n_{k,N}%
=n_{k}\left(  \mathcal{G}_{\mathcal{E}}\right)  $. Let $\Xi_{k}=\left\{
\mathcal{S}_{k,\max}\left(  \mathcal{G}_{\mathcal{E}}\right)  :\mathrm{card}%
\left(  \mathcal{S}_{k,\max}\left(  \mathcal{G}_{\mathcal{E}}\right)  \right)
=\mathrm{s}_{k}\left(  \mathcal{G}_{\mathcal{E}}\right)  \right\}  $ denote
the collection of all (approximate) maximum k-stable sets for $\mathcal{G}%
_{\mathcal{E}}$.

The unknown parameter $\tau_{Y}$ is in the interior of a compact
$\Theta\subset\mathbb{R}^{p}.$Let $\partial\partial_{i}=\left\{
\mathcal{O}_{j},A_{j},L_{j}:j\in\mathcal{N}_{i}^{(2)}\right\}  \cup\left\{
A_{j},L_{j}:j\in\mathcal{N}_{i}^{(3)}\right\}  \cup\left\{  L_{j}%
:j\in\mathcal{N}_{i}^{4)}\right\}  .$ Let
\[
\mathcal{M}_{Y,i}\left(  \tau_{Y},t;\partial_{i}\right)  =-E\left\{  \log
\frac{f\left(  Y_{i}|\mathcal{O}_{i},A_{i},L_{i};t\right)  }{f\left(
	Y_{i}|\mathcal{O}_{i},A_{i},L_{i};\tau_{Y}\right)  }|\partial_{i}\right\}
\geq0;
\]
We make the following assumptions:

\underline{{\large Regularity conditions:}}

Suppose that

\begin{enumerate}
	\item $\mathcal{S}\left(  \mathcal{G}_{\mathcal{E}}\right)  $ is a 1-stable set.
	
	\item $\mathcal{S}\left(  \mathcal{G}_{\mathcal{E}}\right)  $ can be
	partitioned into $K$ 4-stable subsets $\left\{  \mathcal{S}^{k}\left(
	\mathcal{G}_{\mathcal{E}}\right)  :k=1,...,K\right\}  $ such that
	$\mathcal{S}\left(  \mathcal{G}_{\mathcal{E}}\right)  =%
	{\displaystyle\bigcup\limits_{k=1}^{K}}
	\mathcal{S}^{k}\left(  \mathcal{G}_{\mathcal{E}}\right)  .$ Let $n^{(k)}$
	denote the number of units in $\mathcal{S}^{k}\left(  \mathcal{G}%
	_{\mathcal{E}}\right)  $ and $n=\sum_{k}n^{(k)}$ denote the total number of
	units in $\mathcal{S}\left(  \mathcal{G}_{\mathcal{E}}\right)  .$ We further
	assume that there is a $k_{0}$ such that:
	
	\begin{enumerate}
		\item $\lim\inf_{\mathcal{E}}$ $n^{(k_{0})}/n>0$ and $\lim\inf_{\mathcal{E}}$
		$n/N>0.$
		
		\item Let $\mathcal{D}_{i}$ denote the support $\partial_{i}=\left(
		\mathcal{O}_{i},A_{i},L_{i}\right)  .$ The joint support $\mathcal{D=}%
		{\displaystyle\prod\limits_{j\in\mathcal{N}_{i}}}
		\mathcal{D}_{j}$ is a fixed space of neighborhood configurations when
		$i\in\mathcal{S}^{k_{0}}\left(  \mathcal{G}_{\mathcal{E}}\right)  .$
	\end{enumerate}
	
	\item $\exists c>0$ such that for all $i\in\mathcal{S}^{k_{0}}\left(
	\mathcal{G}_{\mathcal{E}}\right)  ,$ we have for all values y$_{i},$
	$\partial_{i},\partial\partial_{i}$ and $t\in\Theta,$%
	\begin{align*}
	f\left(  y_{i}|\partial_{i};t_{y}\right)   &  >c\\
	f\left(  \partial_{i}|\partial\partial_{i},t_{y}\right)   &  >c
	\end{align*}
	w.here $f\left(  y_{i}|\partial_{i};t_{y}\right)  $ and $f\left(  \partial
	_{i}|\partial\partial_{i},t_{y}\right)  $ are density functions where
	$f\left(  y_{i}|\partial_{i};t_{y}\right)  $ is uniformly (in $i,y_{i}%
	,\partial_{i})$ continuous in $t_{y.}$
	
	\item There exists $\mathcal{M}_{y}\left(  \tau_{Y},t;z\right)  \geq0$ with
	$\left(  t,z\right)  \in\Theta\times\mathcal{D}$, $\mu-$integrable for all $t$
	such that:
	
	\begin{enumerate}
		\item $\mathcal{M}_{Y,i}\left(  \tau_{Y},t;z\right)  \geq\mathcal{M}%
		_{Y}\left(  \tau_{Y},t;z\right)  $ if $i\in\mathcal{S}^{k_{0}}\left(
		\mathcal{G}_{\mathcal{E}}\right)  .$
		
		\item $t\rightarrow N\left(  \tau_{Y},t\right)  =\int_{\mathcal{D}}%
		\mathcal{M}_{Y}\left(  \tau_{Y},t;z\right)  \mu\left(  dz\right)  $ is
		continuous and has a unique minimum at $t=\tau_{Y}.$
	\end{enumerate}
	
	\item For all $i\in\mathcal{S}\left(  \mathcal{G}_{\mathcal{E}}\right)  ,$
	$f\left(  y_{i}|\partial_{i};t_{y}\right)  $ admits three continuous
	derivatives at $t_{y}$ in a neighborhood of $\tau_{Y}$, and $1/f\left(
	y_{i}|\partial_{i};t_{y}\right)  $ and the $vth$ derivatives $f^{(v)}\left(
	y_{i}|\partial_{i};t_{y}\right)  $ $v=1,2,3$ are uniformly bounded in
	$i,y_{i},\partial_{i}$ and $t_{y}$ in a neighborhood of $\tau_{Y}.$
	
	\item There exist a positive-definite symmetric non-random $p\times p$ matrix
	$I\left(  \tau_{Y}\right)  $ such that
	\[
	\lim\inf_{\mathcal{E}}\Omega_{n_{1,N}}\geq I\left(  \tau_{Y}\right)
	\]
	
\end{enumerate}

\begin{theorem}
	Under assumptions 1-4, the maximum coding likelihood estimator on
	$\mathcal{S}\left(  \mathcal{G}_{\mathcal{E}}\right)  $ is consistent as
	$N\rightarrow\infty,$ that is
	\[
	\hat{\tau}_{Y}\rightarrow\tau_{Y}\text{ in probability }%
	\]
	where
	\[
	\hat{\tau}_{Y}=\arg\max_{t}\sum_{i\in\mathcal{S}\left(  \mathcal{G}%
		_{\mathcal{E}}\right)  }\log f\left(  Y_{i}|\mathcal{O}_{i},A_{i}%
	,L_{i};t\right)
	\]
	
\end{theorem}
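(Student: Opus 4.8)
The plan is to treat $\widehat\tau_Y$ as an M-estimator and run a Wald-type consistency argument, the key device being to condition on the Markov blankets so that the outcomes of units in the coding set become \emph{independent} (though not identically distributed). Write the normalized coding log-likelihood ratio
\[
U_n(t)=\frac{1}{n}\sum_{i\in\mathcal{S}(\mathcal{G}_{\mathcal{E}})}\log\frac{f(Y_i\mid\partial_i;t)}{f(Y_i\mid\partial_i;\tau_Y)},\qquad \partial_i=(\mathcal{O}_i,A_i,L_i),
\]
so that $\widehat\tau_Y=\arg\max_t U_n(t)$ and $U_n(\tau_Y)=0$. Decompose $U_n(t)=V_n(t)-D_n(t)$, where $D_n(t)=\tfrac1n\sum_{i\in\mathcal{S}}\mathcal{M}_{Y,i}(\tau_Y,t;\partial_i)\ge 0$ is a ``conditional Kullback--Leibler drift'' and $V_n(t)=\tfrac1n\sum_{i\in\mathcal{S}}\xi_i(t)$ with $\xi_i(t)=\log\{f(Y_i\mid\partial_i;t)/f(Y_i\mid\partial_i;\tau_Y)\}+\mathcal{M}_{Y,i}(\tau_Y,t;\partial_i)$ the conditionally centered residual. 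It then suffices to show: (a) $\sup_{t\in\Theta}|V_n(t)|\to 0$ in probability; and (b) for every open $\mathcal V\ni\tau_Y$, $\liminf_{\mathcal{E}}\inf_{t\in\Theta\setminus\mathcal V}D_n(t)>0$ in probability. Granting these, $\sup_{t\in\Theta\setminus\mathcal V}U_n(t)\le\sup_{\Theta}|V_n|-\inf_{\Theta\setminus\mathcal V}D_n<0=U_n(\tau_Y)$ with probability tending to one, forcing $\widehat\tau_Y\in\mathcal V$ eventually.

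For (a): let $\mathcal F=\sigma(\{\partial_i:i\in\mathcal S(\mathcal G_{\mathcal E})\})$. Since Assumption 1 makes $\mathcal S$ a $1$-stable set, no $\partial_{i'}$ contains $Y_i$ for $i,i'\in\mathcal S$; hence the Markov blanket property (3) gives that $\{Y_i:i\in\mathcal S\}$ are mutually independent given $\mathcal F$, each with conditional law $f(\cdot\mid\partial_i)$. Consequently the $\xi_i(t)$ are conditionally independent with $E[\xi_i(t)\mid\mathcal F]=E[\xi_i(t)\mid\partial_i]=0$ by the very definition of $\mathcal{M}_{Y,i}$, and Assumption 3 ($f(y_i\mid\partial_i;t)>c$) bounds each $\xi_i(t)$ uniformly. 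Therefore $E\{V_n(t)^2\}=E\{\mathrm{Var}(V_n(t)\mid\mathcal F)\}=n^{-2}\sum_{i}E\,\mathrm{Var}(\xi_i(t)\mid\mathcal F)=O(n^{-1})$, giving pointwise $L^2$, hence in-probability, convergence to $0$. Uniformity over the compact $\Theta$ follows from a standard covering argument: $t\mapsto f(y_i\mid\partial_i;t)$ is uniformly (in $i,y_i,\partial_i$) continuous and bounded below by $c$ (Assumption 3), so $\{V_n\}$ is asymptotically equicontinuous on $\Theta$ and pointwise convergence on a finite net upgrades to uniform convergence.

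For (b): since each $\mathcal{M}_{Y,i}\ge 0$, restrict to the $4$-stable subset $\mathcal S^{k_0}$ and dominate via Assumption 4(a),
\[
D_n(t)\ \ge\ \frac{n^{(k_0)}}{n}\cdot\frac{1}{n^{(k_0)}}\sum_{i\in\mathcal S^{k_0}}\mathcal{M}_{Y}(\tau_Y,t;\partial_i).
\]
By $4$-stability, conditionally on $\mathcal G'=\sigma(\{\partial\partial_i:i\in\mathcal S^{k_0}\})$ the blankets $\{\partial_i:i\in\mathcal S^{k_0}\}$ are independent, because $\partial\partial_i$ was constructed to contain the graph boundary of $\partial_i$, and for distinct $i,j\in\mathcal S^{k_0}$ these boundaries are disjoint and separate $\partial_i$ from $\partial_j$. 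The conditional-variance bound (again with boundedness and equicontinuity in $t$) then gives $\tfrac{1}{n^{(k_0)}}\sum_{i\in\mathcal S^{k_0}}\{\mathcal{M}_{Y}(\tau_Y,t;\partial_i)-E[\mathcal{M}_{Y}(\tau_Y,t;\partial_i)\mid\partial\partial_i]\}\to 0$, uniformly in $t$. Finally, using $f(\partial_i\mid\partial\partial_i;t_y)>c$ (Assumption 3), the fixed support $\mathcal D$ on $\mathcal S^{k_0}$ (Assumption 2(b)), and $\mathcal{M}_{Y}\ge 0$,
\[
E[\mathcal{M}_{Y}(\tau_Y,t;\partial_i)\mid\partial\partial_i]=\int_{\mathcal D}\mathcal{M}_{Y}(\tau_Y,t;z)\,f(z\mid\partial\partial_i)\,\mu(dz)\ \ge\ c\int_{\mathcal D}\mathcal{M}_{Y}(\tau_Y,t;z)\,\mu(dz)=c\,N(\tau_Y,t),
\]
while Assumption 2(a) keeps $n^{(k_0)}/n$ bounded below. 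Hence $\liminf_{\mathcal E}\inf_{t\in\Theta\setminus\mathcal V}D_n(t)\ge\kappa\inf_{t\in\Theta\setminus\mathcal V}N(\tau_Y,t)$ for some $\kappa>0$, and by Assumption 4(b), $N(\tau_Y,\cdot)$ is continuous with its unique minimum $0$ at $\tau_Y$; since $\Theta\setminus\mathcal V$ is compact, that infimum is strictly positive, which is (b). This completes the proof.

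The delicate point is (b): in a Markov random field, non-adjacent regions are only \emph{conditionally} independent given a separating set, not marginally independent and not a priori mixing, so no law of large numbers applies directly to $\{\mathcal M_Y(\tau_Y,t;\partial_i)\}$. The $4$-stable partition of $\mathcal S$ (Assumption 2) together with the nested boundary $\partial\partial_i$ is exactly what makes the double-conditioning work --- first on $\mathcal F$ for the centered part, then on $\mathcal G'$ for the drift --- and the uniform lower bound on $f(\partial_i\mid\partial\partial_i)$ in Assumption 3 is what converts the conditional expectations into the fixed, identifiable criterion $N(\tau_Y,t)$. Carrying the equicontinuity / uniform-in-$t$ upgrades through these conditioning steps, using only the mild smoothness available in Assumptions 1--4 (the stronger derivative conditions of Assumptions 5--6 being reserved for asymptotic normality), is the part that requires the most care.
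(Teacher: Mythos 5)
Your proposal is correct and follows essentially the same route as the paper's proof: the identical decomposition of the coding log-likelihood ratio into a conditionally centered part (handled via conditional independence of $\{Y_i : i\in\mathcal{S}\}$ given the Markov blankets) plus the nonnegative Kullback--Leibler drift, which is then bounded below on the $4$-stable subset $\mathcal{S}^{k_0}$ by $c\,N(\tau_Y,t)$ using Assumptions 2--4. The only difference is that you explicitly carry out the two steps the paper delegates to Guyon (1995) --- the empirical-measure domination $\liminf F_n \geq c^\ast\lambda$ (your double-conditioning on $\partial\partial_i$ argument is precisely the content of Guyon's Lemma 5.2.2) and the final minimum-contrast consistency step (Corollary 3.4.1), which you replace with a standard Wald argument via uniform convergence over the compact $\Theta$.
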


\begin{proof}
	Define
	\[
	U_{\mathcal{E}}^{k_{0}}\left(  t\right)  =-\frac{1}{n^{(k_{0})}}\sum
	_{i\in\mathcal{S}^{k_{0}}\left(  \mathcal{G}_{\mathcal{E}}\right)  }\log
	f\left(  Y_{i}|\partial_{i};t\right)
	\]
	and write
	\[
	Z_{\mathcal{E}}=-\frac{1}{n^{(k_{0})}}\sum_{i\in\mathcal{S}^{k_{0}}\left(
		\mathcal{G}_{\mathcal{E}}\right)  }\left\{  \log\frac{f\left(  Y_{i}%
		|\partial_{i};t\right)  }{f\left(  Y_{i}|\partial_{i};\tau_{Y}\right)
	}+\mathcal{M}_{Y,i}\left(  \tau_{Y},t;\partial_{i}\right)  \right\}  +\frac
	{1}{n^{(k_{0})}}\sum_{i\in\mathcal{S}^{k_{0}}\left(  \mathcal{G}_{\mathcal{E}%
		}\right)  }\mathcal{M}_{Y,i}\left(  \tau_{Y},t;\partial_{i}\right)
	\]
	As the term in curly braces is the sum of centered random variables with
	bounded variance and conditionally independent given $\left\{  \partial
	_{i}:i\in\mathcal{S}^{k_{0}}\left(  \mathcal{G}_{\mathcal{E}}\right)
	\right\}  ,$
	\[
	\lim_{N\rightarrow\infty}\frac{1}{n^{(k_{0})}}\sum_{i\in\mathcal{S}^{k_{0}%
		}\left(  \mathcal{G}_{\mathcal{E}}\right)  }\left\{  \log\frac{f\left(
		Y_{i}|\partial_{i};t\right)  }{f\left(  Y_{i}|\partial_{i};\tau_{Y}\right)
	}+\mathcal{M}_{Y,i}\left(  \tau_{Y},t;\partial_{i}\right)  \right\}  =0\text{
	}a.s.
	\]
	Then we deduce the following sequence of inequalities almost surely:
	\begin{align*}
	\lim\inf_{\mathcal{E}}Z_{\mathcal{E}}  &  =\lim\inf_{\mathcal{E}}\left\{
	\frac{1}{n^{(k_{0})}}\sum_{i\in\mathcal{S}^{k_{0}}\left(  \mathcal{G}%
		_{\mathcal{E}}\right)  }\mathcal{M}_{Y,i}\left(  \tau_{Y},t;\partial
	_{i}\right)  \right\} \\
	&  \geq\lim\inf_{\mathcal{E}}\int_{\mathcal{D}}\mathcal{M}_{Y}\left(  \tau
	_{Y},t;z\right)  F_{n}\left(  \mathcal{S}^{k_{0}}\left(  \mathcal{G}%
	_{\mathcal{E}}\right)  ,dz\right)
	\end{align*}
	where
	\[
	F_{n}\left(  \mathcal{S}^{k_{0}}\left(  \mathcal{G}_{\mathcal{E}}\right)
	,dz\right)  =\frac{1}{n^{(k_{0})}}\sum_{i\in\mathcal{S}^{k_{0}}\left(
		\mathcal{G}_{\mathcal{E}}\right)  }1\left(  \partial i\in dz\right)
	\]
	By assumptions 1-3 and Lemma 5.2.2 of Guyon (1995), then there is a positive
	constant $c^{\ast}$ such that
	\[
	\lim\inf_{\mathcal{E}}F_{n}\left(  \mathcal{S}^{k_{0}}\left(  \mathcal{G}%
	_{\mathcal{E}}\right)  ,dz\right)  \geq c^{\ast}\lambda\left(  dz\right)
	\]
	therefore
	\begin{align*}
	\lim\inf_{\mathcal{E}}Z_{\mathcal{E}}  &  \geq c^{\ast}\int_{\mathcal{D}%
	}\mathcal{M}_{Y}\left(  \tau_{Y},t;z\right)  \lambda\left(  dz\right) \\
	&  \equiv c^{\ast}N\left(  \tau_{Y},t\right)
	\end{align*}
	Then note that%
	
	\begin{align*}
	U_{\mathcal{E}}\left(  t\right)   &  =-\frac{1}{n}\sum_{i\in\mathcal{S}\left(
		\mathcal{G}_{\mathcal{E}}\right)  }\log f\left(  Y_{i}|\partial_{i};t\right)
	\\
	&  =\sum_{k=1}^{K}\frac{n^{(k)}}{n}U_{\mathcal{E}}^{k}\left(  t\right)
	\end{align*}
	where
	\[
	U_{\mathcal{E}}^{k}\left(  t\right)  =-\frac{1}{n^{(k)}}\sum_{i\in
		\mathcal{S}^{k}\left(  \mathcal{G}_{\mathcal{E}}\right)  }\log f\left(
	Y_{i}|\partial_{i};t\right)
	\]
	Consistency of the coding maximum likelihood estimator then follows by
	assumption 1-4 and Corollary 3.4.1 of Guyon (1995).\ Consistency of the pseudo
	maximum likelihood estimator follows from
	\begin{align*}
	U_{\mathcal{E}}^{P}\left(  t\right)   &  \equiv-\frac{1}{N}\sum_{i}\log
	f\left(  Y_{i}|\partial_{i};t\right) \\
	&  =\frac{n}{N}U_{\mathcal{E}}\left(  t\right)  +\frac{\left(  N-n\right)
	}{N}\overline{U}_{\mathcal{E}}\left(  t\right)
	\end{align*}
	$U_{\mathcal{E}}^{P}\left(  t\right)  \equiv\left(  N-n\right)  ^{-1}%
	\sum_{i\not \in \mathcal{S}\left(  \mathcal{G}_{\mathcal{E}}\right)  }\log
	f\left(  Y_{i}|\partial_{i};t\right)  $ and a further application of Corollary
	3.4.1 of Guyon (1995).
\end{proof}

The next result establishes asymptotic normality of $\hat{\tau}_{Y}.$

\begin{theorem}
	Under Assumptions 1-6, as $N\rightarrow\infty$\textit{,}
	\begin{align*}
	&  \sqrt{n}\Omega_{n}^{1/2}\left(  \widehat{\tau}_{Y}-\tau_{Y}\right)
	\underset{N\longrightarrow\infty}{\longrightarrow}N\left(  0,I\right)  ;\\
	\Omega_{n}  &  =\frac{1}{n}\sum_{i\in\mathcal{S}\left(  \mathcal{G}%
		_{\mathcal{E}}\right)  }\left\{  \frac{\partial\log\mathcal{CL}_{Y,\mathcal{S}%
			\left(  \mathcal{G}_{\mathcal{E}}\right)  ,i}\left(  \tau_{Y}\right)
	}{\partial\tau_{Y}}\right\}  ^{\otimes2}.
	\end{align*}
	
\end{theorem}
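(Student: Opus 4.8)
The plan is to treat $\hat\tau_Y$ as an M-estimator solving the coding score equations and follow the classical route: linearize, control the Hessian by a conditional law of large numbers, and establish a self-normalized central limit theorem for the score. Write $s_i(t)=\partial_t\log f(Y_i\mid\partial_i;t)$ and $\dot s_i(t)=\partial_t s_i(t)$, so $\hat\tau_Y$ solves $\sum_{i\in\mathcal S(\mathcal G_{\mathcal E})}s_i(\hat\tau_Y)=0$, an identity holding with probability tending to one since $\hat\tau_Y\to\tau_Y$ (by the consistency result proved above) and $\tau_Y$ is interior to $\Theta$. A second-order Taylor expansion in $t$ gives, for some $\tilde\tau$ on the segment joining $\hat\tau_Y$ to $\tau_Y$,
\[
\sqrt n\,(\hat\tau_Y-\tau_Y)=-\Big(\tfrac1n\sum_{i}\dot s_i(\tilde\tau)\Big)^{-1}\tfrac1{\sqrt n}\sum_{i}s_i(\tau_Y),
\]
with $n=\mathrm{card}(\mathcal S(\mathcal G_{\mathcal E}))$, and the argument reduces to identifying the limits of the two factors on the right.

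For the Hessian, I would first replace $\dot s_i(\tilde\tau)$ by $\dot s_i(\tau_Y)$ up to an $o_p(1)$ term, using consistency together with the uniform boundedness and equicontinuity of the first three derivatives of $f(y_i\mid\partial_i;t)$ near $\tau_Y$ (Assumption 5). The conditional information identity gives $E[\dot s_i(\tau_Y)\mid\partial_i]=-E[s_i(\tau_Y)s_i(\tau_Y)^T\mid\partial_i]$, so $\tfrac1n\sum_i\dot s_i(\tau_Y)=-\tfrac1n\sum_iE[s_is_i^T\mid\partial_i]+\tfrac1n\sum_i\big(\dot s_i(\tau_Y)+E[s_is_i^T\mid\partial_i]\big)$. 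Because $\mathcal S(\mathcal G_{\mathcal E})$ is $1$-stable (Assumption 1), conditionally on $\mathcal F:=\sigma\big(\mathbf A,\mathbf L,\{Y_j:j\notin\mathcal S(\mathcal G_{\mathcal E})\}\big)$ the variables $\{Y_i:i\in\mathcal S(\mathcal G_{\mathcal E})\}$ are mutually independent with $Y_i\mid\mathcal F\sim f(\cdot\mid\partial_i;\tau_Y)$, each $\partial_i$ being $\mathcal F$-measurable since the neighbors of $i$ lie outside $\mathcal S(\mathcal G_{\mathcal E})$; hence the bracketed terms are conditionally independent, conditionally mean-zero and uniformly bounded, and a conditional strong law forces $\tfrac1n\sum_i\big(\dot s_i(\tau_Y)+E[s_is_i^T\mid\partial_i]\big)\xrightarrow{p}0$. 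Applying the same conditioning to $\Omega_n=\tfrac1n\sum_i s_i(\tau_Y)^{\otimes2}$ shows $\Omega_n-\tfrac1n\sum_iE[s_is_i^T\mid\partial_i]\xrightarrow{p}0$ as well, so altogether $\tfrac1n\sum_i\dot s_i(\tilde\tau)+\Omega_n\xrightarrow{p}0$.

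The core step is the central limit theorem for $S_n:=n^{-1/2}\sum_{i\in\mathcal S(\mathcal G_{\mathcal E})}s_i(\tau_Y)$, which I would carry out by conditioning on $\mathcal F$. Given $\mathcal F$, the summands are independent, centered and uniformly bounded (Assumption 5), with $\mathrm{Var}(S_n\mid\mathcal F)=V_n:=\tfrac1n\sum_iE[s_is_i^T\mid\partial_i]$, and from the previous paragraph $\Omega_n-V_n\xrightarrow{p}0$, while $\liminf_{\mathcal E}\Omega_n\ge I(\tau_Y)\succ 0$ (Assumption 6) guarantees $V_n$ is eventually bounded away from singularity. The Lindeberg--Feller condition then holds for the array $\{V_n^{-1/2}s_i/\sqrt n\}$ because $\|s_i\|$ is bounded and $n\lambda_{\min}(V_n)\to\infty$, so $V_n^{-1/2}S_n\mid\mathcal F\Rightarrow N(0,I)$; since the limit does not depend on $\mathcal F$, integrating the conditional characteristic functions (dominated convergence) gives $V_n^{-1/2}S_n\Rightarrow N(0,I)$, and replacing $V_n$ by $\Omega_n$ via Slutsky yields $\Omega_n^{-1/2}S_n\Rightarrow N(0,I)$. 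Combining with the Hessian limit and one more application of Slutsky, $\sqrt n\,\Omega_n^{1/2}(\hat\tau_Y-\tau_Y)=-\Omega_n^{1/2}\big(\tfrac1n\sum_i\dot s_i(\tilde\tau)\big)^{-1}S_n\Rightarrow N(0,I)$. Alternatively, one may verify that Assumptions 1--6 are exactly the hypotheses of the central limit theorem for coding estimators in Guyon (1995, Thm.~3.4.3) and invoke it directly.

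I expect the main obstacle to be the passage from the conditional CLT back to an unconditional statement while justifying the self-normalization: one must establish $\Omega_n-V_n\to 0$ in probability, not merely in conditional mean, and check that the Lindeberg condition holds along the random configurations $\{\partial_i\}$, both of which rely on the uniform density and derivative bounds of Assumptions 3 and 5 and on the non-degeneracy of Assumption 6; the equicontinuity argument controlling $\tilde\tau$ in the Hessian is routine by comparison. It is worth noting that the $1$-stability of $\mathcal S(\mathcal G_{\mathcal E})$ is precisely what renders $\{\partial_i\}$ $\mathcal F$-measurable and the $\{Y_i\}$ conditionally independent, so the coding construction itself, rather than any mixing hypothesis on the field, does the work, and the $4$-stable refinement of Assumption 2 enters only through its role in securing consistency in the preceding theorem.
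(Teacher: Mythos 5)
Your proposal is correct and follows essentially the same route as the paper's proof: a Taylor expansion of the coding score, convergence of the Hessian to $-\Omega_n$ via Assumptions 5--6 and the information equality, and a central limit theorem for the score contributions, which are centered, bounded, and conditionally independent given the units' Markov blankets precisely because the coding set is $1$-stable (the paper invokes Breiman's CLT for non-i.i.d.\ bounded variables where you spell out the conditional Lindeberg--Feller argument and the passage back to an unconditional limit). Your write-up simply makes explicit several steps the paper leaves terse.
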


\begin{proof}
	$\hat{\tau}_{Y}$ solves $V_{n}\left(  \hat{\tau}_{Y}\right)  =\frac
	{\partial\log\mathcal{CL}_{Y}\left(  \tau_{Y}\right)  }{\partial\tau_{Y}%
	}|_{\hat{\tau}_{Y}}=0,$ therefore
	\begin{align*}
	0  &  =\sqrt{n}V_{n}\left(  \tau_{Y}\right)  +\dot{V}_{n}\left(  \tau_{Y}%
	,\hat{\tau}_{Y}\right)  \sqrt{n}\left(  \hat{\tau}_{Y}-\tau_{Y}\right) \\
	V_{n}\left(  \tau_{Y}\right)   &  =\frac{1}{\sqrt{n}}\sum_{i\in\mathcal{S}%
		\left(  \mathcal{G}_{\mathcal{E}}\right)  }\mathcal{D}_{i}=\frac{1}{\sqrt{n}%
	}\sum_{i\in\mathcal{S}\left(  \mathcal{G}_{\mathcal{E}}\right)  }\left\{
	\frac{\partial\log\mathcal{CL}_{Y,\mathcal{S}\left(  \mathcal{G}_{\mathcal{E}%
			}\right)  ,i}\left(  \tau_{Y}\right)  }{\partial\tau_{Y}}\right\} \\
	\dot{V}_{n}\left(  \tau_{Y},\hat{\tau}_{Y}\right)   &  =\int_{0}^{1}%
	\frac{\partial^{2}\log\mathcal{CL}_{Y}\left(  \tau_{Y}\right)  }{\partial
		\tau_{Y}\partial\tau_{Y}^{T}}|_{t\left(  \hat{\tau}_{Y}-\tau_{Y}\right)
		+\tau_{Y}}dt
	\end{align*}
	As the variables $\mathcal{D}_{i}$ are centered, bounded and independent
	conditionally on $\left\{  \partial_{i}:i\in\mathcal{S}^{k_{0}}\left(
	\mathcal{G}_{\mathcal{E}}\right)  \right\}  ,$ one may apply a central limit
	theorem for non-iid bounded variable (Breiman, 1992). Under assumptions 5 and
	6, it follows that $\dot{V}_{n}\left(  \tau_{Y},\hat{\tau}_{Y}\right)
	+\Omega_{n}\left(  \tau_{Y}\right)  \rightarrow^{P}0_{p\times p},$ proving the result.
\end{proof}

Proofs of consistency of coding and pseudo maximum likelihood estimators of
$\tau_{L}$ , as well as asymptotic normality of coding estimator of $\tau_{L}$
follow along the same lines as above, upon substituting $\partial_{i}=\left\{
L_{j}:j\in\mathcal{N}_{i}\right\}  ,$ and replacing Assumption 2 with the
assumption that $\mathcal{S}\left(  \mathcal{G}_{\mathcal{E}}\right)  $ can be
partitioned into $K$ 2-stable subsets $\left\{  \mathcal{S}^{k}\left(
\mathcal{G}_{\mathcal{E}}\right)  :k=1,...,K\right\}  $ such that
$\mathcal{S}\left(  \mathcal{G}_{\mathcal{E}}\right)  =%
{\displaystyle\bigcup\limits_{k=1}^{K}}
\mathcal{S}^{k}\left(  \mathcal{G}_{\mathcal{E}}\right)  ,$ and that there is
a $k_{0}$ such that Assumptions 2.a. and 2.b. are satisfied.

\pagebreak
\section*{Additional simulation results: varying sample size} 

\begin{figure}[!htbp] 
	\caption{Simulation results of coding and pseudo-likelihood based estimators of covariate model gibbs factors for low density network of size 800 ($n_1 = 375$)}
	\includegraphics[width=\linewidth]{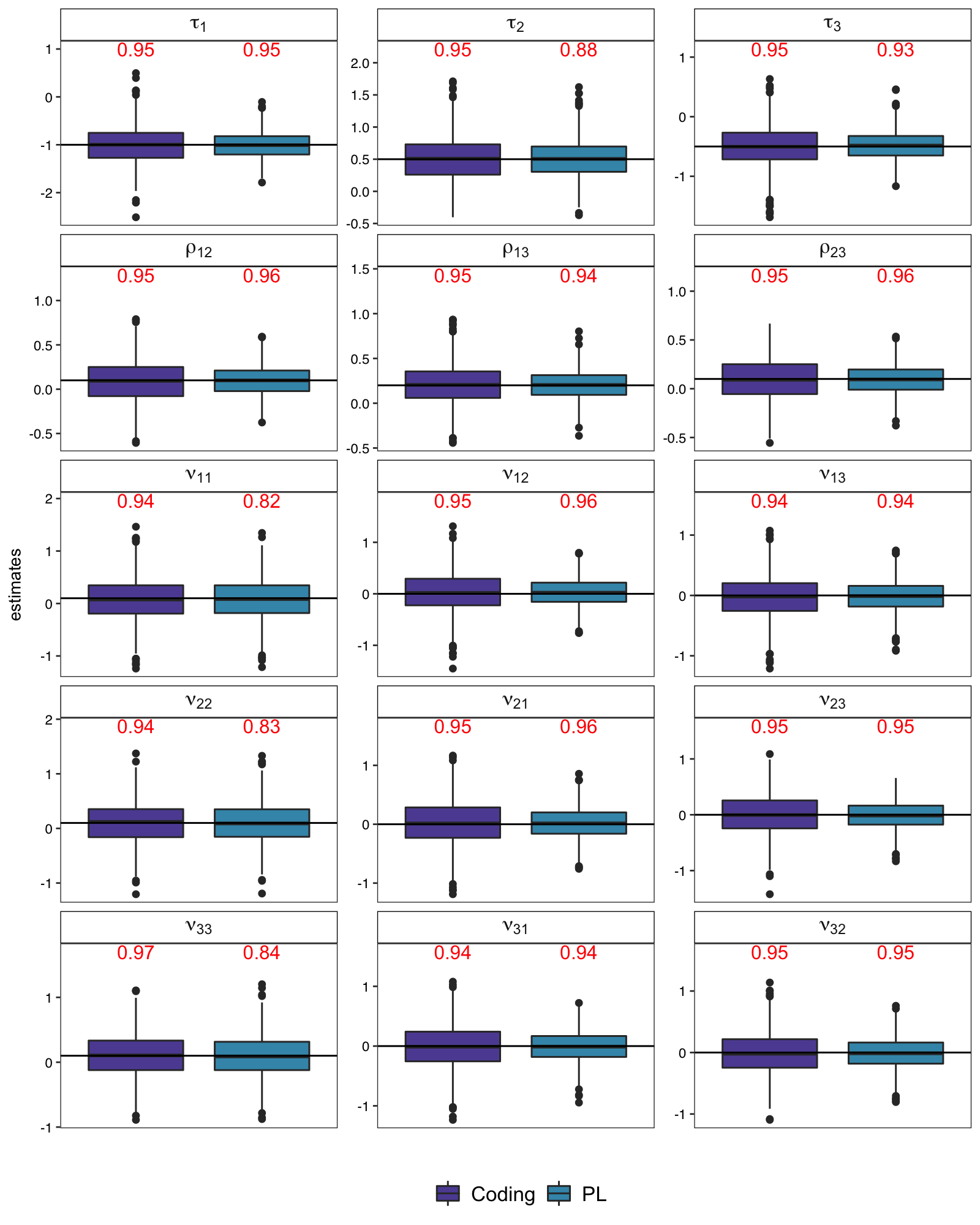}
\end{figure}

\begin{figure}[!htbp] 
	\caption{Simulation results of coding and pseudo-likelihood based estimators of outcome model gibbs factors for low density network of size 800 ($n_1 = 375$)}
	\includegraphics[width=\linewidth]{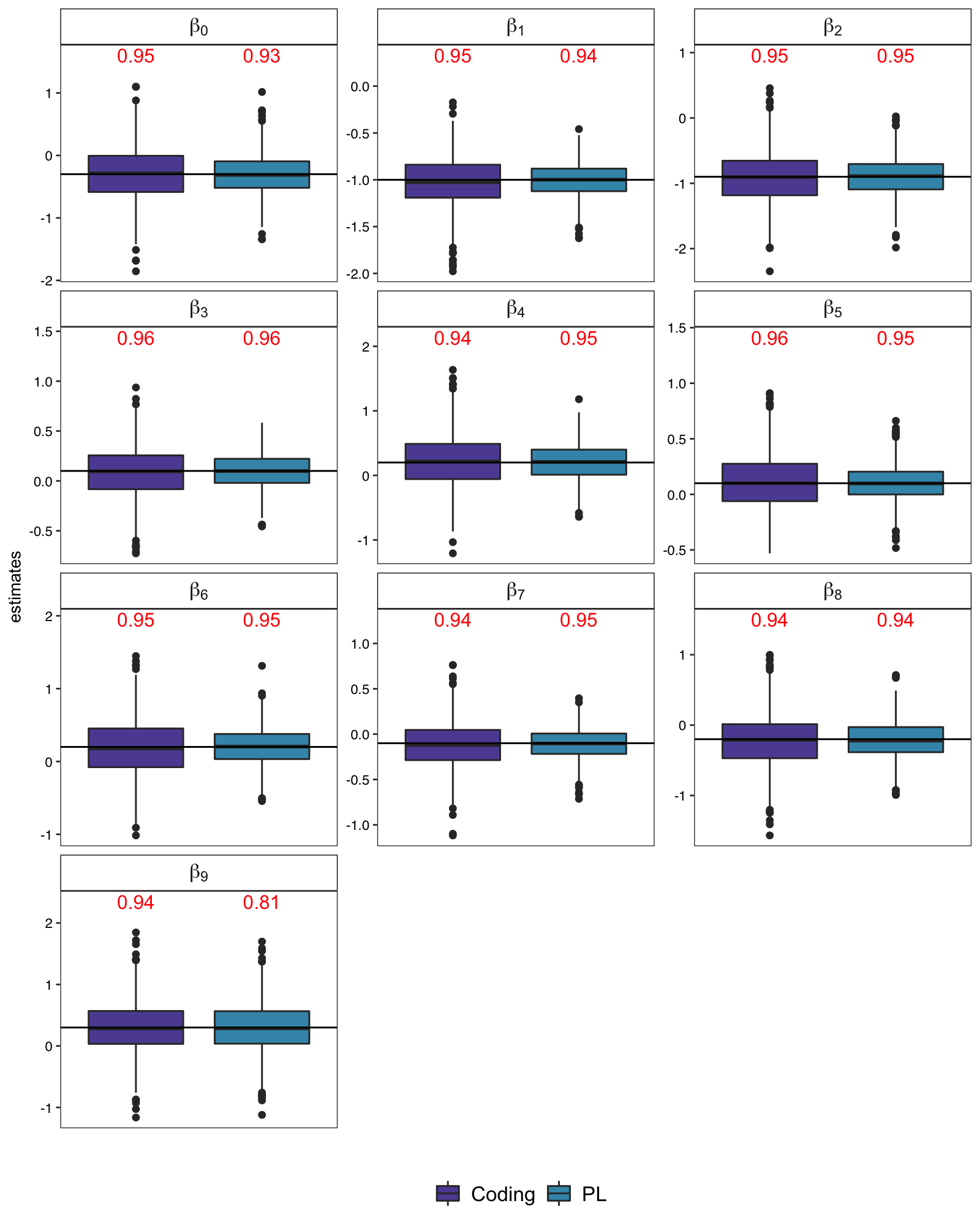}
\end{figure}

\begin{table}[!htbp] \centering
	\caption{Simulation results of coding and pseudo-likelihood based estimators of network causal effects for low density network of size 200} 
	\label{} 
	\begin{tabular}{cccccc}
		\\[-1.8ex]\hline 
		& Truth  & Absolute Bias & MC Variance & Robust Variance & 95\% CI Coverage \\ 
		\hline
		\multicolumn{1}{l}{Coding ($n_1 = 94$)} &&&&& \\
		$E(Y(\mathbf{a}))$ 			& 0.211     & $0.005$   & $0.005$  & $0.0004$ & 0.935 \\ 
		Spillover 						  &  -0.166  & $0.002$   & 0.021  & 0.015 & 0.922 \\ 
		Direct 							   &  -0.179  & $0.006$ & $0.008$ & $0.008$ & 0.944 \\ 
		&&&&& \\
		\multicolumn{1}{l}{Pseudo-likelihood}  &&&&& \\
		$E(Y(\mathbf{a}))$ 			& 0.211    & $0.005$    & $0.002$ & - &  - \\ 
		Spillover 						  &  -0.166 & $0.002$    & 0.010 & - & - \\ 
		Direct 							   &  -0.179 & $0.006$  & $0.004$ & - & - \\ 
		\hline 
	\end{tabular}
	\newline
\end{table}

\begin{table}[!htbp] \centering 
	\caption{Simulation results of coding and pseudo-likelihood based estimators of network causal effects for low density network of size 400} 
	\label{} 
	\begin{tabular}{cccccc}
		\\[-1.8ex]\hline 
		& Truth  & Absolute Bias & MC Variance & Robust Variance & 95\% CI Coverage \\ 
		\hline
		\multicolumn{1}{l}{Coding ($n_1 = 183$)}  &&&&& \\
		$E(Y(\mathbf{a}))$ 			& 0.211       & $0.003$ & $0.002$ & 0.002 & 0.936 \\ 
		Spillover 						  &  -0.166    & $0.006$ & $0.009$ & 0.008 & 0.932 \\ 
		Direct 							   &  -0.179    & $-0.002$ & 0.004 & 0.004 & 0.961 \\ 
		&&&&& \\
		\multicolumn{1}{l}{Pseudo-likelihood}   &&&&& \\
		$E(Y(\mathbf{a}))$ 		& 0.211    & $0.003$ & 0.001 & - & - \\ 
		Spillover 					  &  -0.166 & $0.006$    & 0.004 & - & - \\ 
		Direct 						   &  -0.179 & $-0.002$    & 0.002 & - & - \\ 
		\hline
	\end{tabular}
	\newline
\end{table}

\begin{table}[!htbp] \centering
	\caption{Simulation results of coding and pseudo-likelihood based estimators of network causal effects for low density network of size 1,000} 
	\label{} 
	\begin{tabular}{cccccc}
		\\[-1.8ex]\hline 
		& Truth  &  Bias & MC Variance & Robust Variance & 95\% CI Coverage \\ 
		\hline
		\multicolumn{1}{l}{Coding ($n_1 = 433$)} &&&&& \\
		$E(Y(\mathbf{a}))$ 			& 0.211     & $0.001$   & $0.001$  & $0.001$ & 0.941 \\ 
		Spillover 						  &  -0.167  & $0.001$   & 0.004  & 0.005 & 0.935 \\ 
		Direct 							   &  -0.179  & $0.001$ & $0.002$ & $0.002$ & 0.943 \\ 
		&&&&& \\
		\multicolumn{1}{l}{Pseudo-likelihood}  &&&&& \\
		$E(Y(\mathbf{a}))$ 			& 0.211    & $0.001$    & $<0.001$ & - &  - \\ 
		Spillover 						  &  -0.148 & $0.001$    & 0.002 & - & - \\ 
		Direct 							   &  -0.176 & $0.001$  & $0.001$ & - & - \\ 
		\hline 
	\end{tabular}
	\newline
\end{table}

\pagebreak
\section*{Additional simulation results: small, dense networks} 

\begin{figure}[!htbp] 
	\centering
	\caption{Simulation results of coding and pseudolikelihood based estimators for the causal estimands for high density network of size 100 and 200}
	\subfloat[Dense network of size 100 ($n_1 = 24$; 44\% of iterations did not converge and are excluded)]{\includegraphics[scale=.22]{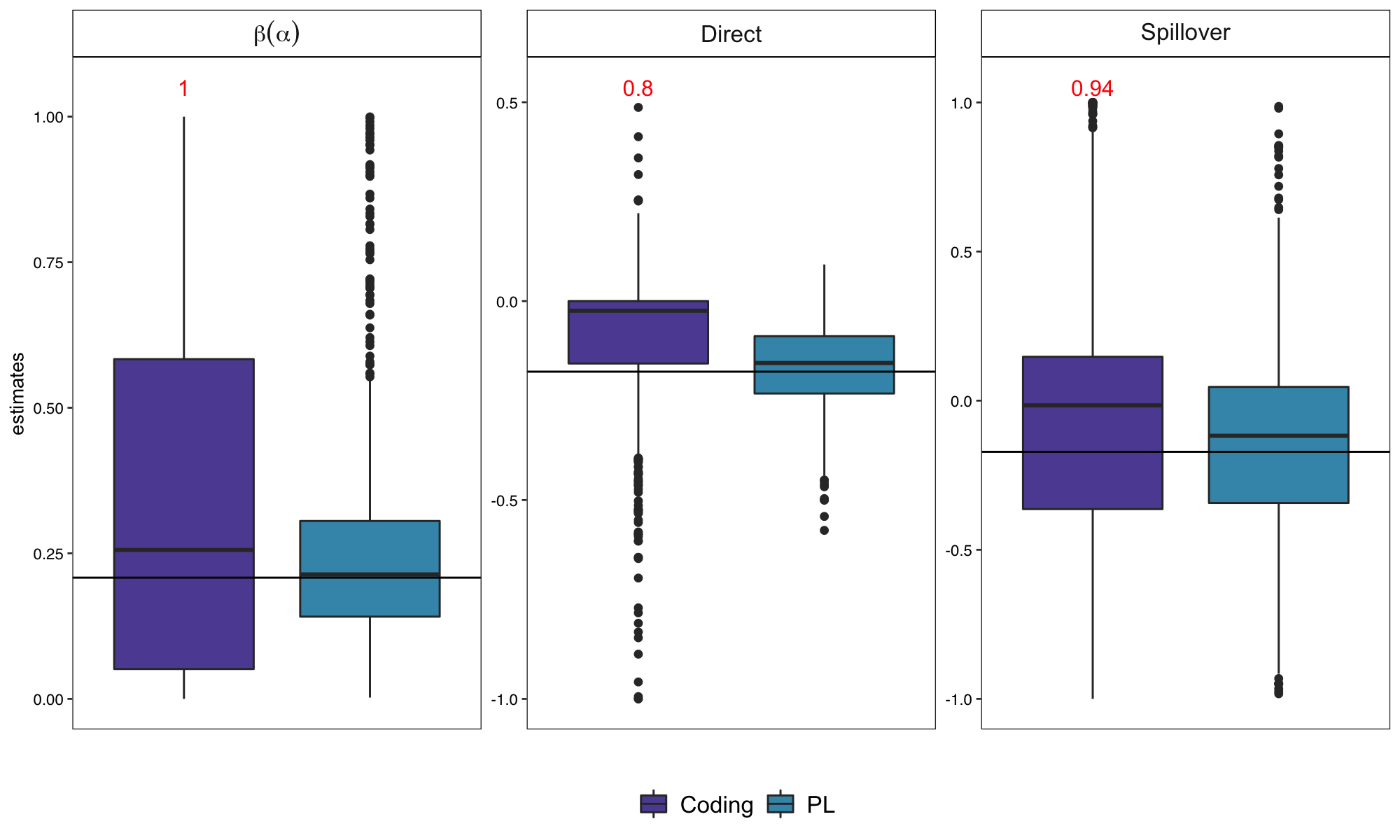}} \\
	\subfloat[Dense network of size 200 ($n_1 = 57$)]{\includegraphics[scale=.22]{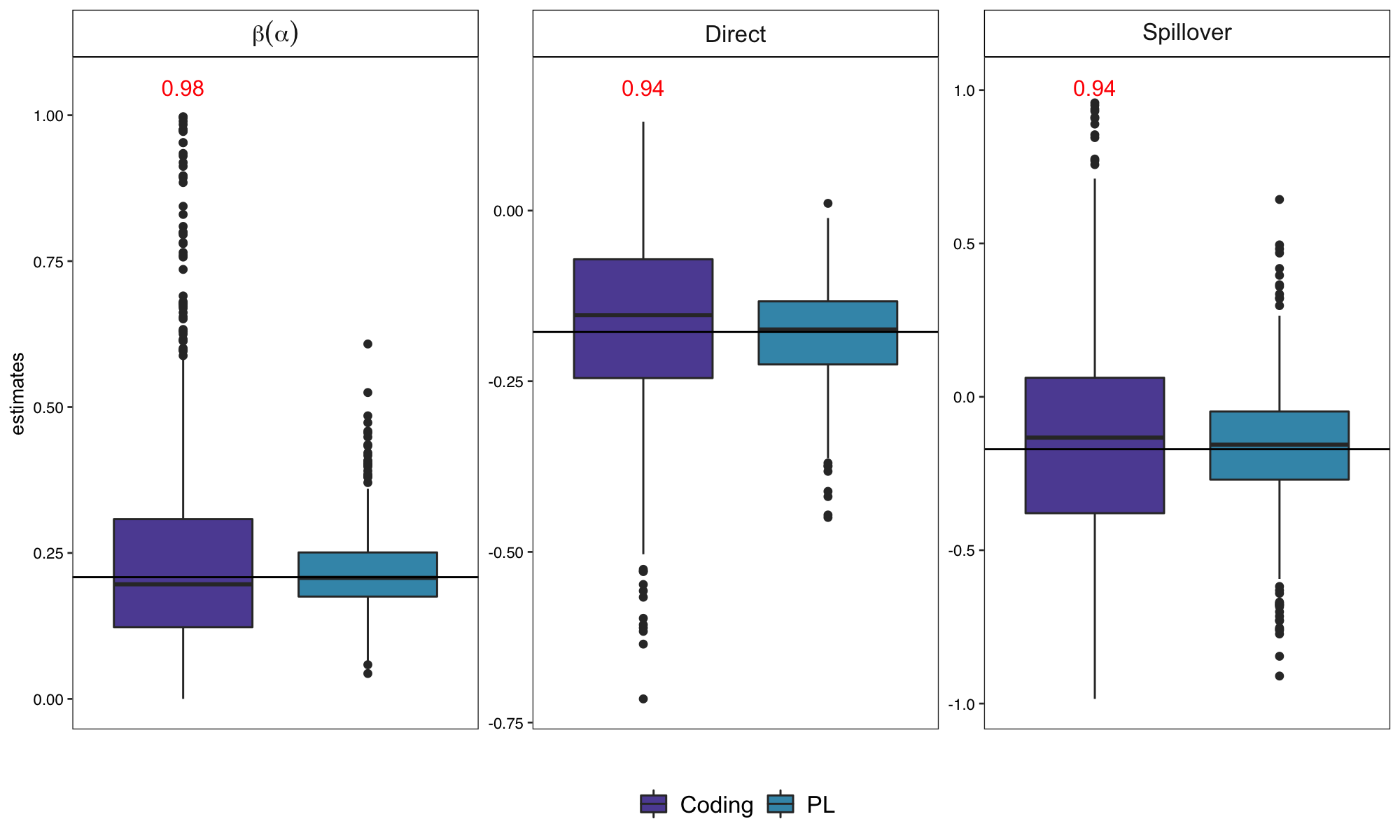}}
\end{figure}

\pagebreak
\section*{Additional simulation results: missing edges} 

\begin{figure}[!htbp]
	\centering
	\caption{Simulation results of coding and pseudo-likelihood based estimators of outcome model gibbs factors for high density network of size 800 with missing edges ($n_1 = 224$)}
	\includegraphics[scale=.28]{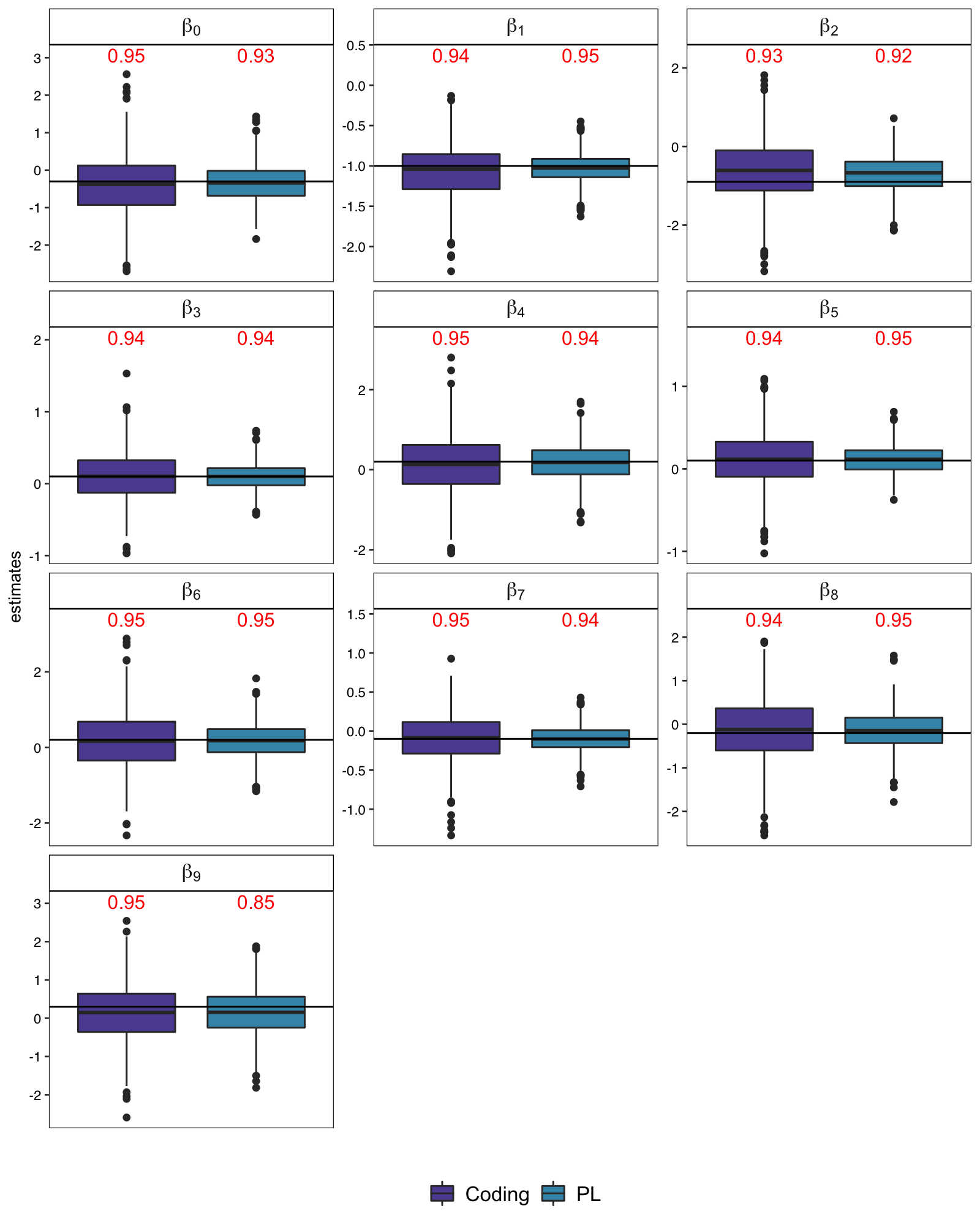}
\end{figure}

\begin{figure}[!htbp] 
	\centering
	\caption{Simulation results of coding and pseudolikelihood based estimators for the causal estimands for high density network of size 800 with and without missing edges}
	\subfloat[No missing edges ($n_1 = 224$)]{\includegraphics[scale=.22]{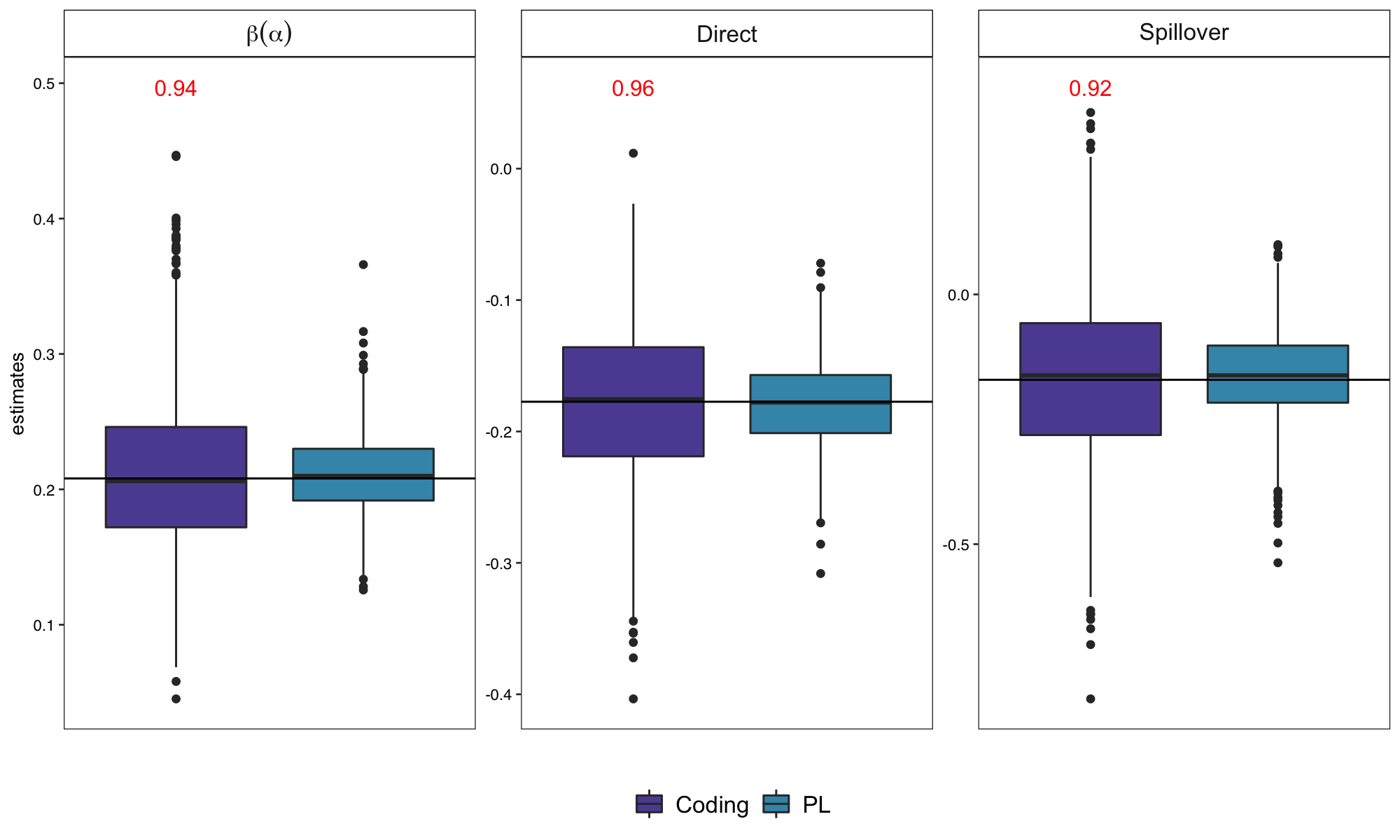}} \\
	\subfloat[Missing edges, 15\% of edges randomly removed ($n_1 = 254$)]{\includegraphics[scale=.22]{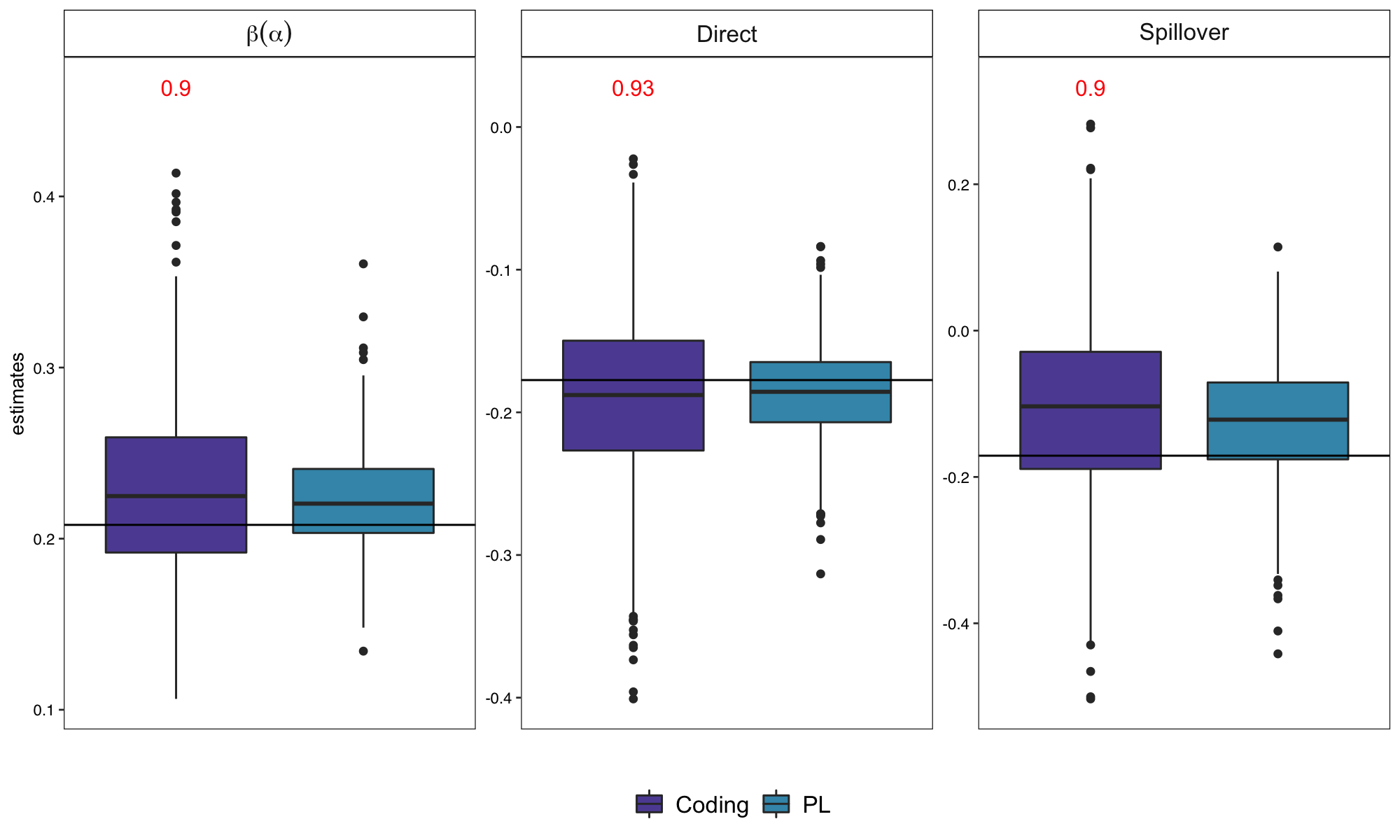}}
\end{figure}

\pagebreak 
\section*{Data application: auto-model parameters}

\begin{figure}[!htbp] 
	\centering
	\caption{Auto-model parameters where the circles give point estimates and line gives the 95\% confidence interval}
	\subfloat[Outcome]{\includegraphics[scale=.10]{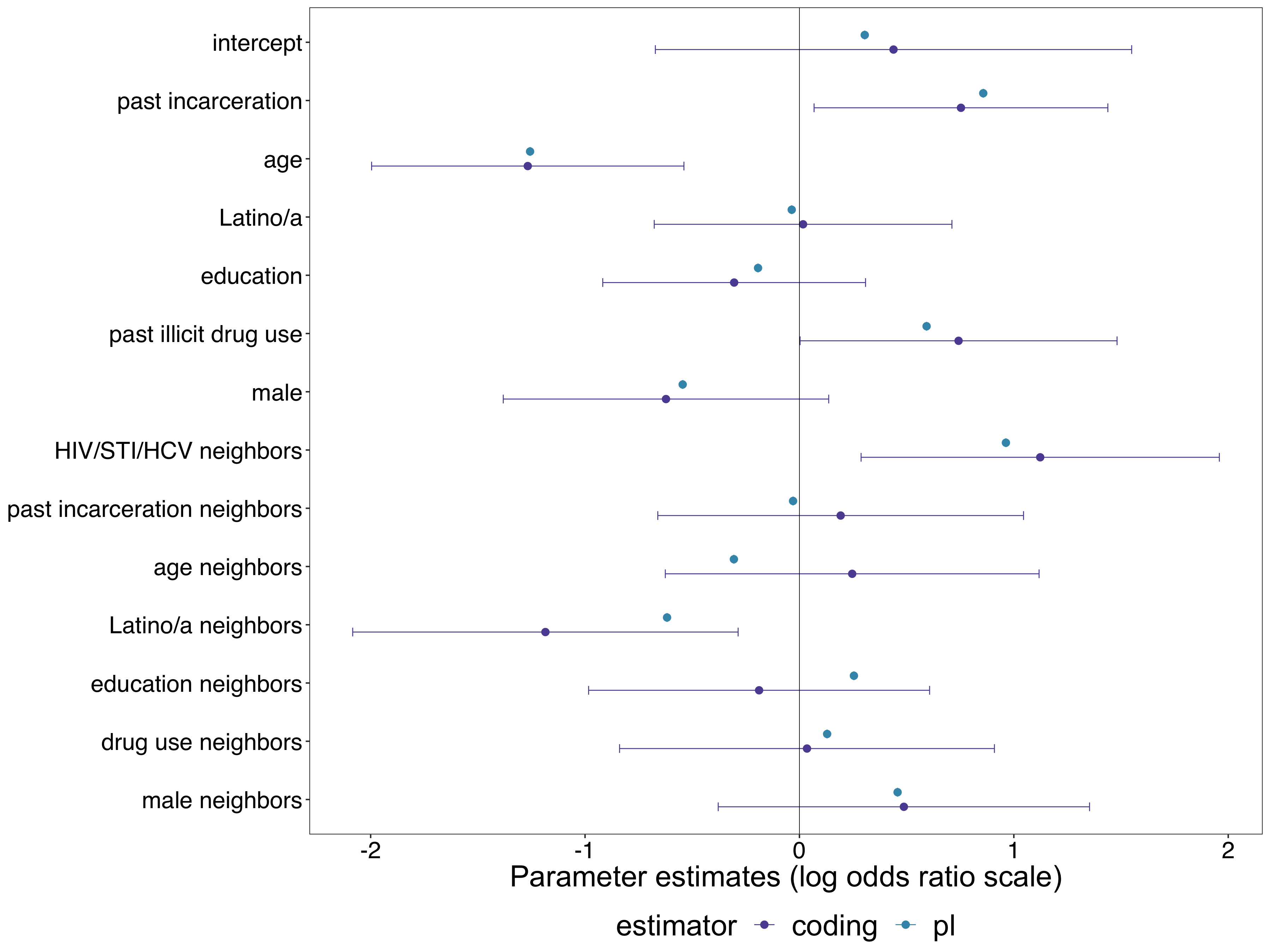}} \\
	\subfloat[Covariate]{\includegraphics[scale=.10]{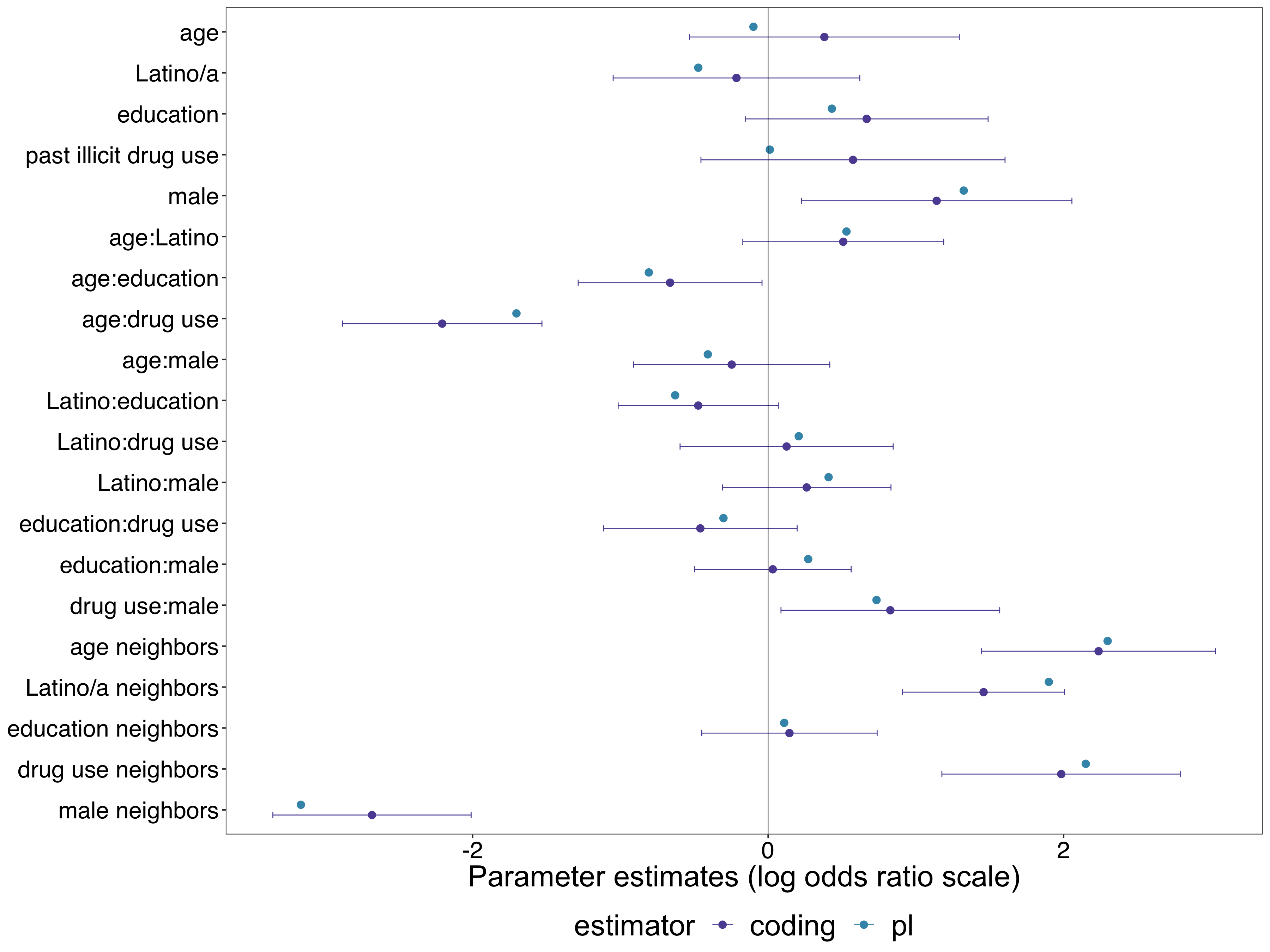}}
\end{figure}

\section*{Data application: alternate model specifications}

We propose the two alternate specifications of the outcome auto-model that incorporate neighbor terms, $W_i$, excluding covariate terms ($L_i$, $\sum_j L_j$). Note that we accounted for covariates in estimation. The auto-model parameter estimates for each option are given in Table 4. The network causal effect estimates are given in Table 5. 

\begin{align}
\textrm{logit}( Pr[Y_i = 1 | W_i, A_i, A_j, Y_i, Y_j ]) & = \beta_0 + \beta_1 A_i + \beta_2 \sum_j A_j/W_i + \beta_3 \sum_j Y_j/W_i + \beta_4 W_i \tag{A} \\
\textrm{logit}( Pr[Y_i = 1 | W_i, A_i, A_j, Y_i, Y_j ]) & = \beta_0 + \beta_1 A_i + \beta_2 \sum_j A_j + \beta_3 \sum_j Y_j + \beta_4 W_i \tag{B} 
\end{align}

\begin{table}[!htbp]
	\caption{Outcome auto-model parameters estimates for coding estimators (excluding covariates) on the odds ratio scale}%
	\centering
	\begin{tabular}
		[c]{lcc}
		&  & \\[-1.8ex]\hline
		& Estimates & 95\% CI \\\hline
		\multicolumn{1}{l}{\textbf{Alternate option (A)}}& &   \\
		\ \ Past incarceration status (individual) & 2.05 & [1.03, 4.11] \\
		\ \ Past incarceration status (proportion neighbors)  & 1.15 & [0.49, 2.69] \\
		\ \ HIV/STI/HCV status (proportion neighbors) & 2.78 & [1.19, 6.50] \\
		\ \ Number of neighbors & 1.28 & [0.84, 1.93] \\
		&  &  \\
		\multicolumn{1}{l}{\textbf{Alternate option (B)}} &  &  \\
		\ \ Past incarceration status (individual) & 2.09 & [1.04, 4.21] \\
		\ \ Past incarceration status (sum neighbors)  & 1.30 & [0.69, 2.44]  \\
		\ \ HIV/STI/HCV status (sum neighbors) & 2.33 & [1.16, 4.70]  \\
		\ \ Number of neighbors & 0.62 & [0.30, 1.31]  \\
		\hline
	\end{tabular}
\end{table}

\begin{table}[!htbp]
	\caption{Network causal effect estimates for coding based estimators (excluding covariates)}%
	\centering
	\begin{tabular}
		[c]{lcc}
		&  & \\[-1.8ex]\hline
		& Estimates & 95\% CI \\\hline
		\multicolumn{1}{l}{\textbf{Alternate option (A)}}& &   \\
		\ \ Spillover &  0.028  & [-0.053, 0.122]  \\
		\ \ Direct  & 0.131 & [0.001, 0.245] \\
		&  &  \\
		\multicolumn{1}{l}{\textbf{Alternate option (B)}} &  &  \\
		\ \ Spillover & 0.121 & [-0.063, 0.243] \\
		\ \ Direct  & 0.150 & [0.016, 0.279]  \\
		\hline
	\end{tabular}
\end{table}

\section*{Data application: simulation under the sharp null} 
We conducted a simulation study to evaluate the operating characteristics of our proposed estimator under the sharp null in the NNAHRAY network. The network structure for the simulation study was based on the NNAHRAY observed network structure excluding singletons (i.e. persons with no ties), $N=412$ ($n_1 = 229$). The true parameter values for $\boldsymbol{\alpha}$ and $\boldsymbol{\beta}$ are given in Table 6. We generated 500 realizations of ($\mathbf{Y},\mathbf{A},\mathbf{L}$) by sampling from the Gibbs factors at their true parameter values. 

\begin{table}[!htbp]
	\centering
	\caption{NNAHRAY simulation parameter values}
	\begin{tabular}{cl|cc|cc}
		\multicolumn{2}{c|}{\textbf{Covariate model $\boldsymbol{\alpha}$}}      & \multicolumn{2}{c|}{\textbf{Treatment model $\boldsymbol{\theta}$}} & \multicolumn{2}{c}{\textbf{Outcome model $\boldsymbol{\beta}$}} \\ \hline
		$\boldsymbol{\tau}$  & \multicolumn{1}{c|}{(0.8,1.0,0.1,-0.1)}           & $\theta_0$                         & -1.5                           & $\beta_0$                        & -1.1                          \\
		$\boldsymbol{\rho}$  & \multicolumn{1}{c|}{(-1.1,-0.3,0.3,-0.7,0.2,1.0)} & $\boldsymbol{\theta_1}$            & (-0.3,-0.6,1.1,1.1)            & $\beta_1$                        & 0                           \\
		$\boldsymbol{\nu}$   & \multicolumn{1}{c|}{(1.0 -0.2,1.0,-1.0)}          & $\theta_2$                         & 0                            & $\boldsymbol{\beta}_2$           & (-0.3,0.0,0.8,-0.3)           \\
		\multicolumn{1}{l}{} &                                                   & $\boldsymbol{\theta_3}$            & (-0.3,0.2,0.9,-0.9)            & $\beta_3$                        & 0                           \\
		\multicolumn{1}{l}{} &                                                   &                                    &                                & $\beta_4$                        & 0                           \\
		\multicolumn{1}{l}{} &                                                   &                                    &                                & $\boldsymbol{\beta}_5$           & (-0.5,0.0,0.5,0.7)           
	\end{tabular}
\end{table}

We then calculated the ``true" value of the direct and spillover effect using the same method described in the Simulation section 5. For all effects, we considered treatment assignment to be a Bernoulli random variable with probability $\gamma = 0.5$. The spillover effects compared network incarceration allocations of 50\% to 0\% and the direct effect was estimated at an incarceration allocation of 50\%. Under the sharp null, the true network direct and spillover effects were 0. The results are as expected and given in Table 7. 

\begin{table}[!htbp] \centering
	\caption{Simulation results of coding estimators of network causal effects in NNAHRAY network under sharp null} 
	\label{} 
	\begin{tabular}{cccccc}
		\\[-1.8ex]\hline 
		& Truth  &  Mean estimate & MC Variance & 95\% CI Coverage \\ 
		\hline
		Spillover 						  &  0  & $-0.0003$   & 0.005 &  0.952 \\ 
		Direct 							   &  0 & $-0.0005$ & 0.002 &  0.955 \\ 
		\hline
	\end{tabular}
	\newline
\end{table}

\end{document}